\newcommand{\Z}{\ensuremath{\mathbb{Z}}}
\newcommand{\N}{\ensuremath{\mathbb{N}}}
\newcommand{\pw}{\ensuremath{\mathbf{pw}}}
\newcommand{\bag}[1]{B_{#1}}
\newcommand{\subgraph}[1]{G_{#1}}
\newcommand{\subbags}[1]{V_{#1}}
\newcommand{\subedges}[1]{E_{#1}}
\newcommand{\targetW}{W}
\newcommand{\iterE}{\overline{\sigma}}
\newcommand{\iterV}{\overline{\tau}}
\newcommand{\itere}{\sigma}
\newcommand{\iterv}{\tau}
\newcommand{\itercE}{\overline{\mu}}
\newcommand{\itercV}{\overline{\nu}}
\newcommand{\iterce}{\mu}
\newcommand{\itercv}{\nu}
\newcommand{\eiterE}{\overline{\mathbf{y}}}
\newcommand{\eiterV}{\overline{\mathbf{x}}}
\newcommand{\eitere}{\mathbf{y}}
\newcommand{\eiterv}{\mathbf{x}}
\newcommand{\eitercE}{\overline{\mathbf{cy}}}
\newcommand{\eitercV}{\overline{\mathbf{cx}}}
\newcommand{\eiterce}{\mathbf{cy}}
\newcommand{\eitercv}{\mathbf{cx}}
\newcommand{\treedecomp}{\mathbb{T}}
\newcommand{\vX}{\overline{X}}
\newcommand{\vY}{\overline{Y}}
\newcommand{\eX}{X}
\newcommand{\eY}{Y}
\newcommand{\vFX}{\overline{FX}}
\newcommand{\vFY}{\overline{FY}}
\newcommand{\eFX}{FX}
\newcommand{\eFY}{FY}
\newcommand{\Xmarkers}{\overline{\mathcal{M}^X}}
\newcommand{\Ymarkers}{\overline{\mathcal{M}^Y}}
\newcommand{\xmarkers}{\mathcal{M}^X}
\newcommand{\ymarkers}{\mathcal{M}^Y}
\newcommand{\sols}{\mathcal{S}}
\newcommand{\cand}{\mathcal{R}}
\newcommand{\objs}{\mathcal{C}}
\newcommand{\compnomark}{\mathtt{cc}}
\renewcommand{\subset}{\subseteq}
\newcommand{\defproblemu}[3]{
  \vspace{1mm}
\noindent\fbox{
  \begin{minipage}{\textwidth}
  #1 \\
  {\bf{Input:}} #2  \\
  {\bf{Question:}} #3
  \end{minipage}
  }
  \vspace{1mm}
}
\newcommand{\zero}{{\mathbf{0}}}
\newcommand{\oneone}{{\mathbf{1}_1}}
\newcommand{\onetwo}{{\mathbf{1}_2}}
\newcommand{\ff}{\varphi}
\newcommand{\dia}{\diamondsuit}
\newcommand{\rect}{\square}
\newcommand{\Ii}{\mathcal{I}}
\newcommand{\Hh}{\mathcal{H}}
\newcommand{\Pp}{\mathcal{P}}
\newcommand{\Xx}{\mathcal{X}}
\newcommand{\Yy}{\mathcal{Y}}
\newcommand{\Gg}{\mathcal{G}}
\newcommand{\hampath}{{\sc Hamiltonian Path}\xspace}
\newcommand{\longestpath}{{\sc Longest Path}\xspace}
\newcommand{\longestcycle}{{\sc Longest Cycle}\xspace}
\newcommand{\gmtsp}{{\sc Graph Metric Travelling Salesman Problem}\xspace}
\newcommand{\krecognition}{$\mathcal{K}$-{\sc Recognition}\xspace}
\newcommand{\rdomset}{$r$-{\sc Dominating Set}\xspace}
\newcommand{\vertexcover}{{\sc Vertex Cover}\xspace}
\newcommand{\cvertexcover}{{\sc Connected Vertex Cover}\xspace}
\newcommand{\cdomset}{{\sc Connected Dominating Set}\xspace}
\newcommand{\steinertree}{{\sc Steiner Tree}\xspace}
\newcommand{\fvs}{{\sc Feedback Vertex Set}\xspace}
\newcommand{\cfvs}{{\sc Connected Feedback Vertex Set}\xspace}
\newcommand{\coct}{{\sc Connected Odd Cycle Transversal}\xspace}
\newcommand{\mincyclecovername}{{\sc{Min Cycle Cover}}\xspace}
\newcommand{\exactleaf}{{\sc{Exact $k$-leaf Spanning Tree}}\xspace}
\newcommand{\exactoutbranching}{{\sc{Exact $k$-Leaf Outbranching}}\xspace}
\newcommand{\maxspantree}{{\sc{Maximum Full Degree Spanning Tree}}\xspace}
\newcommand{\disjointpaths}{{\sc{Disjoint Paths}}\xspace}
\newcommand{\lvertexdeletion}{$C_l$-{\sc{Vertex Deletion}}\xspace}
\newcommand{\lcliquevertexdeletion}{$K_l$-{\sc{Vertex Deletion}}\xspace}
\newcommand{\ltlvertexdeletion}{{\sc{Girth $>l$ Vertex Deletion}}\xspace}
\newcommand{\cnfsat}{{\sc{CNF-SAT}}\xspace}
\newcommand{\tcnfsat}{{\sc{3CNF-SAT}}\xspace}
\newcommand{\tcnf}{{\sc{3CNF}}\xspace}
\newcommand{\lagl}{{\sc{ECML}}\xspace}
\newcommand{\laglc}{{\sc{ECML+C}}\xspace}
\newcommand{\laglfull}{{\sc{Existential Counting Modal Logic}}\xspace}
\newcommand{\cmgl}{{\sc{CML}}\xspace}
\newcommand{\cmglfull}{{\sc{Counting Modal Logic}}\xspace}
\begin{document}

  \date{}

  \author{
	      Micha\l{} Pilipczuk
  }
  \institute{
    Faculty of Mathematics, Informatics and~Mechanics, \\
    University of Warsaw, Poland\\
    \email{michal.pilipczuk@students.mimuw.edu.pl}
  }

  \title{Problems parameterized by treewidth tractable in single exponential time: a~logical approach}

\maketitle
\begin{abstract}
We introduce a~variant of modal logic, dubbed \laglfull (\lagl), which captures a~vast majority of problems known to be tractable in single exponential time when parameterized by treewidth. It appears that all these results can be subsumed by the theorem that model checking of \lagl admits an~algorithm with such complexity. We extend \lagl by adding connectivity requirements and, using the Cut\&Count technique introduced by Cygan et al.~\cite{my}, prove that problems expressible in the extension are also tractable in single exponential time when parameterized by treewidth; however, using randomization. The need for navigationality of the introduced logic is justified by a~negative result that two expository problems involving non-acyclic conditions, \lvertexdeletion and~\ltlvertexdeletion for $l\geq 5$, do not admit such a~robust algorithm unless Exponential Time Hypothesis fails.
\end{abstract}

\section{Introduction}

The notion of treewidth, introduced by Robertson and~Seymour in their proof of Wagner's Conjecture \cite{rs:minors3}, in recent years proved to be an~excellent tool for capturing characteristics of certain graph classes. Of particular interest are algorithmic applications of treewidth. Many problems, while hard in general, become robustly tractable, when the input graph is of bounded treewidth --- a~usual technique bases on constructing a~dynamic programming algorithm on the tree decomposition. When combined with the graph-theoretical properties of treewidth, the approach leads to a~number of surprisingly efficient algorithms, including approximation~\cite{bidimensionality,eppstein}, parameterized~\cite{subexponential-bound-genus,enumerate-and-expand} and~exact algorithms~\cite{Fomin06ontwo,rooij:inclusion/exclusion}. In most cases, the dynamic program serves as a~subroutine that solves the problem, when the treewidth turns out to be small.

The tractability of problems parameterized by treewidth can be generalized into a~meta-theorem of Courcelle~\cite{Courcelle}: there exists an~algorithm that, given a~MSO formula $\ff$ and~a~graph $G$ of treewidth $t$, tests whether $\ff$ is true in $G$ in time $f(|\ff|,t)|G|$ for some function $f$. Courcelle's Theorem can be viewed as a~generalization of Thatcher and~Wright Theorem about equivalence of MSO on finite trees and~tree automata; in fact, in the proof one constructs an~analogous tree automaton working on the tree decomposition. Unfortunately, similarly to other theorems regarding MSO and~automata equivalence, the function $f$, which is in fact the time needed to process automaton's production, can depend very badly on $|\ff|$ and~$t$~\cite{revisited,weyer}. Therefore, a~lot of effort has been invested in actual construction of the dynamic programming algorithms mimicking the behaviour of a~minimal bottom-up automaton in order to obtain solutions that can be considered efficient and~further used as robust subroutines. One approach, due to Arnborg et al.~\cite{maximisation}, is extending MSO by maximisation or~minimisation properties, which corresponds to augmenting the automaton with additional counters. In many cases, the length of the formula defining the problem can be reduced to constant size, yielding a~$f(t)|G|^{O(1)}$ time algorithm. Unfortunately, careful analysis of the algorithm shows that the obtained function $f$ can be still disastrous; however, for many concrete problems the algorithm can be designed explicitly and~the complexity turns out to be satisfactory. For example, for the expository \vertexcover problem, the book by Kleinberg and~Tardos gives an~algorithm with running time $4^t|G|^{O(1)}$~\cite{kleinberg-tardos}, while the book by Niedermeier contains a~solution with complexity $2^t|G|^{O(1)}$~\cite{niedermeier:book}.

Recently, Lokshtanov et al.~\cite{treewidth-lower} initiated a~deeper study of currently best dynamic programming routines working in single exponential time in terms of treewidth. For a~number of problems they proved them to be probably optimal: a~faster solution would yield a~better algorithm for \cnfsat than exhaustive search. One can ask whether the phenomenon is more general: the straightforward dynamic programming solution reflecting the seemingly minimal automaton is optimal under believed assumptions. This question was stated by the same set of authors in~\cite{marx:superexp} for a~number of problems based on connectivity requirements, like \cvertexcover{} or~\hampath{}. For these, the considered routines work in time $2^{O(t\log t)}|G|^{O(1)}$, and~a~matching lower bound for one such problem, \disjointpaths, was already established~\cite{marx:superexp}.

Surprisingly, the answer turned out to be negative. Very recently, Cygan et al.~\cite{my} introduced a~technique called Cut\&Count that yields single exponential in terms of treewidth Monte-Carlo algorithms for a~number of connectivity problems, thus breaking the expected limit imposed by the size of the automaton. The results also include several intriguing lower bounds: while problems that include minimization of the number of connected components of the solution are tractable in single exponential time in terms of treewidth, similar tractability results for maximization problems would contradict {\emph{Exponential Time Hypothesis}}. Recall that {\emph{Exponential Time Hypothesis}} (ETH) states that the infinimum of such $c$ that there exists a~$c^n$ algorithm solving \tcnfsat ($n$ is the number of variables), is greater than $1$.

A natural question arises: what properties make a~problem tractable in single exponential time in terms of treewidth? Can we obtain a~logical characterization, similar to Courcelle's Theorem?

\noindent{\bf{Our contribution.}} We introduce a~model of logic, dubbed \laglfull (\lagl), which captures nearly all the problems known to admit an~algorithm running in single exponential time in terms of treewidth. The model consists of a~variation of modal logic, encapsulated in a~framework for formulating computational problems. We prove that model checking of \lagl formulas is tractable in single exponential time, when parameterized by treewidth. In addition to solving the decision problem, the algorithm can actually count the number of solutions. The result generalizes a~number of explicit dynamic programming routines (for example~\cite{jochen,fomin,fiorini,grohe:book,johan-tw}), however yielding significantly worse constants in the bases of exponents.

Furthermore, we extend the \lagl by connectivity requirements in order to show that the tractability result for \lagl can be combined with the Cut\&Count technique of Cygan et al. Again, we are able to show similar tractability for all the problems considered in~\cite{my}, however with significantly worse constants in the bases of exponents.

Finally, we argue that the introduced logic has to be in some sense navigational or~acyclic, by showing intractability in time $2^{o(p^2)}|G|^{O(1)}$ under ETH of two model non-acyclic problem, \lvertexdeletion and~\ltlvertexdeletion for $l\geq 5$, where $p$ is the width of a~given path decomposition.

\noindent{\bf{Outline.}} In Section~\ref{sec:preliminaries}, we introduce the notation and~recall the well-known definitions. We try to follow the notation from~\cite{my} whenever it is possible. In Section~\ref{sec:logic}, we introduce the model of logic. Section~\ref{sec:tractability} contains the main tractability result, while Section~\ref{sec:connectivity} treats of combining it with the Cut\&Count technique. The details of the dynamic program described in Section~\ref{sec:tractability} can be found in Appendix~\ref{sec:tractability-details} and~the proof of the tractability result for the connectivity extension (dubbed \laglc) can be found in Appendix~\ref{sec:con-proof}. In Appendix \ref{sec:formulae}, we present \laglc formulas for all the connectivity problems considered in~\cite{my}. The reader can treat this part as a~good source of examples of formulas of the introduced logic. In Section~\ref{sec:negative}, we prove the intractability results under ETH. Again, the details of the presented reduction can be found in Appendix~\ref{sec:negative-details}. Section~\ref{sec:conclusions} is devoted to concluding remarks and~suggestions on the further study.

\section{Preliminaries and~notation}\label{sec:preliminaries}

\subsection{Notation}\label{ssec:notation}
\label{subsec:not}
Let $G = (V,E)$ be a~(directed) graph. By $V(G)$ and~$E(G)$ we denote the sets of vertices and~edges (arcs) of $G$,
respectively. Let $|G|=|V(G)|+|E(G)|$. For a~vertex set $X \subset V(G)$ by $G[X]$ we denote the subgraph induced by $X$. For an~edge set $X \subset E$, by $V(X)$ denote the set of the endpoints of the edges from $X$, and~by $G[X]$ --- the subgraph $(V(X), X)$.
Note that for an~edge set $X$, $V(G[X])$ may differ from $V(G)$.

In a~directed graph $G$ by connected components we mean the 
connected components of the underlying undirected graph.
For a~subset of vertices or~edges $X$ of $G$, we denote by $\compnomark(X)$ the number of connected components of $G[X]$.

A monoid is a~semigroup with identity. The identity of a~monoid $M$ is denoted by $e_M$, while the operations in monoids are denoted by $+$. We treat the natural numbers $\N$ (nonnegative integers) also as a~monoid with operation $+$ and~identity $0$.

\subsection{Treewidth and~pathwidth}\label{ssec:treewidth}

\begin{definition}[Tree Decomposition, \cite{rs:minors3}]
A \emph{tree decomposition} of a~(undirected or~directed) graph~$G$ is a~tree~$\treedecomp$ in which each 
vertex~$x \in \treedecomp$ has an~assigned set of vertices~$B_x \subseteq V$ 
(called a~\emph{bag}) such that $\bigcup_{x \in \treedecomp} B_x = V$ with the 
following properties:
\begin{itemize}
\item for any $uv \in E$, there exists an~$x \in \treedecomp$ such that 
$u,v \in B_x$.
\item if $v \in B_x$ and~$v \in B_y$, then $v \in B_z$ for all $z$ on 
the path from $x$ to $y$ in $\treedecomp$.
\end{itemize}
\end{definition}

The \emph{treewidth}~$tw(\treedecomp)$ of a~tree decomposition~$\treedecomp$ is the size of 
the largest bag of $\treedecomp$ minus one. The treewidth of a~graph $G$ is the 
minimum treewidth over all possible tree decompositions of~$G$. A~{\em path decomposition} is a~tree decomposition that is a~path.
The pathwidth of a~graph is the minimum width over all path decompositions.

We use a~modified version of tree decomposition from~\cite{my}, called {\emph{nice tree decomposition}}, which is more suitable for development of dynamic programs. The idea of adjusting the tree decomposition to algorithmic needs comes from Kloks~\cite{Kloks94}.

\begin{definition}[Nice Tree Decomposition, Definition 2.3 of~\cite{my}]
A \emph{nice tree decomposition} is a~tree decomposition with one special 
bag $r$ called the \emph{root} with $B_r = \emptyset$ and~in which each 
bag is one of the following types:
\begin{itemize}
\item \textbf{Leaf bag}: a~leaf $x$ of $\treedecomp$ with $B_x = \emptyset$.
\item \textbf{Introduce vertex bag}: an~internal vertex~$x$ of $\treedecomp$ 
with one child vertex~$y$ for which $B_x = B_y \cup \{v\}$ 
for some $v \notin B_y$. 
This bag is said to \emph{introduce} $v$.
\item \textbf{Introduce edge bag}: an~internal vertex~$x$ of $\treedecomp$ labeled 
with an~edge $uv \in E$ with one child bag~$y$ for which 
$u,v \in B_x = B_y$. 
This bag is said to \emph{introduce} $uv$.
\item \textbf{Forget bag}: an~internal vertex~$x$ of $\treedecomp$ with one child 
bag~$y$ for which $B_x = B_y \setminus \{v\}$ for some $v \in B_y$. 
This bag is said to \emph{forget} $v$.
\item \textbf{Join bag}: an~internal vertex $x$ with two child vertices 
$y$ and~$z$ with $B_x = B_y = B_z$.
\end{itemize}
We additionally require that every edge in $E$ is introduced exactly once. 
\end{definition}
The main differences between standard nice tree decompositions used by Kloks~\cite{Kloks94} and~this notion are: emptiness of leaf and~root bags and~usage of introduce edge bags.

As Cygan et al. observed in~\cite{my}, given an~arbitrary tree decomposition, a~nice tree decomposition of the same width can be found in polynomial time. Therefore, we can assume that all our algorithms are given a~tree decomposition that is nice.

Having fixed the root $r$, we associate with each node $x$ 
of a~tree decomposition $\treedecomp$ a~set $V_x \subseteq V$, where a~vertex 
$v$ belongs to $V_x$ iff there is a~bag $y$ which is a~descendant of $x$ in $\treedecomp$ with $v \in B_y$ (we follow convention that $x$ is its own descendant). We also associate with each bag $x$ of $\treedecomp$ a~subgraph of $\subgraph{x}$ defined as follows:
$$ \subgraph{x} = \left(\subbags{x}, \subedges{x} = \{e\ |\  \textrm{$e$ is introduced in a~descendant of $x$ } \}\right). $$
As every edge is introduced exactly once, for each join bag $x$ with children $y,z$, $\subedges{x}$ is a~disjoint sum of $\subedges{y}$ and~$\subedges{z}$.

\section{The model of logic}\label{sec:logic}

We begin with introducing a~notion of a~{\emph{finitely recognizable set}}.

\begin{definition}
A set $S\subseteq \N$ is called {\emph{finitely recognizable}} iff there exists a~finite monoid $M$, a~set $F\subseteq M$ and~homomorphism $\alpha_S: \N \to M$ such that $S=\alpha_S^{-1}(F)$.
\end{definition}

The notion of finitely recognizable sets coincides with semilinear sets over $\N$. To better understand the intuition behind it, let us state following simple fact. 

\begin{lemma}\label{lem:finrec}
A set $S\subseteq \N$ is finitely recognizable iff it is ultimately periodic, i.e. there exist positive integers $N,k$ such that $n\in S \Leftrightarrow n+k\in S$ for all $n\geq N$.
\end{lemma}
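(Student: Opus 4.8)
The plan is to prove the two implications separately, exploiting the fact that $\N$ is the free monoid on one generator, so a homomorphism $\alpha_S \colon \N \to M$ is completely determined by the single element $\alpha_S(1)$.

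For the ``only if'' direction, suppose $S = \alpha_S^{-1}(F)$ for a finite monoid $M$, a subset $F \subseteq M$, and a homomorphism $\alpha_S \colon \N \to M$. Since $n = \underbrace{1 + \cdots + 1}_{n}$, we have $\alpha_S(n) = \underbrace{\alpha_S(1) + \cdots + \alpha_S(1)}_{n}$, so the sequence $\bigl(\alpha_S(n)\bigr)_{n \in \N}$ takes values in the finite set $M$ and, by the pigeonhole principle, there are $0 \le i < j$ with $\alpha_S(i) = \alpha_S(j)$. Put $k := j - i > 0$ and $N := \max(i,1)$. Then for every $n \ge N$ one computes $\alpha_S(n + k) = \alpha_S(n - i) + \alpha_S(j) = \alpha_S(n - i) + \alpha_S(i) = \alpha_S(n)$, and therefore $n \in S \iff \alpha_S(n) \in F \iff \alpha_S(n+k) \in F \iff n + k \in S$. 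Hence $S$ is ultimately periodic.

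For the ``if'' direction, fix positive integers $N, k$ witnessing ultimate periodicity. I would define a relation $\equiv$ on $\N$ by declaring $a \equiv b$ iff either $a = b$, or both $a, b \ge N$ and $a \equiv b \pmod k$. One checks that $\equiv$ is an equivalence relation and a monoid congruence: if $a \equiv b$ with $a \ne b$, then $a, b \ge N$ and $a \equiv b \pmod k$, so $a + c, b + c \ge N$ and $a + c \equiv b + c \pmod k$ for every $c \in \N$. Consequently $M := \N / {\equiv}$ is a monoid, and it is finite: its classes are the singletons $\{0\}, \dots, \{N-1\}$ together with the $k$ residue classes modulo $k$ among integers $\ge N$, so $|M| = N + k$. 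Let $\alpha_S \colon \N \to M$ be the canonical projection (a surjective homomorphism) and set $F := \alpha_S(S)$. To conclude $S = \alpha_S^{-1}(F)$ it suffices to show $S$ is a union of $\equiv$-classes, i.e. $a \equiv b \Rightarrow (a \in S \iff b \in S)$; this is clear when $a = b$, and otherwise $b = a + tk$ for some $t \ge 1$ (say $b > a$) with $a, b \ge N$, so applying the periodicity equivalence $n \in S \iff n+k \in S$ (valid for $n \ge N$) a total of $t$ times yields $a \in S \iff b \in S$.

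The only genuinely delicate point is the construction in the converse direction: one must pick the right congruence on $\N$ (the ``clip-and-wrap'' identification $N \equiv N+k$) and verify that it is compatible with addition and that $S$ is saturated by it. The forward direction and all remaining verifications are routine pigeonhole and computation.
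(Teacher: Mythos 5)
Your proof is correct and follows essentially the same route as the paper: in the converse direction your quotient monoid $\N/{\equiv}$ is exactly the paper's ``clip-and-wrap'' monoid on $\{0,1,\dots,N+k-1\}$, just presented as a quotient by a congruence (which makes the monoid axioms and the homomorphism property automatic) rather than by an explicit addition table. The only other difference is minor: in the forward direction you derive eventual periodicity of $(\alpha_S(n))_{n\in\N}$ by pigeonhole from a repetition $\alpha_S(i)=\alpha_S(j)$, whereas the paper invokes the idempotent power $\omega$ of the finite monoid and takes $N=k=\omega$; both arguments are routine and correct.
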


The fact can be considered a~folklore, however for the sake of completeness the proof can be found in Appendix~\ref{sec:logic-proof}.

Intuitively, the main property of finitely recognizable sets that will be useful, is that one can represent the behaviour of a~nonnegative integer with respect to the operation of addition by one of finitely many values --- the elements of the monoid. 

Now, we are ready to introduce the syntax and~semantics of \lagl. We will do this in two steps. First, we introduce the inner, modal part of the syntax. Then, we explain how this part is to be put into the context of quantification over subsets of vertices and~edges, thus creating a~framework for defining computational problems.

\subsection{The inner logic}

The inner logic will be called \cmglfull (\cmgl). A~formula $\psi$ of \cmgl is evaluated in a~certain vertex $v$ of a~(directed) graph $G$ supplied by a~vector of subsets of vertices $\vX$ and~a~vector of subsets of edges $\vY$, of length $p,q$ respectively. If $\psi$ is true in vertex $v$ of graph $G$, we will denote it by $G,\vX,\vY,v\models \psi$. We begin with the syntax of \cmgl for undirected graphs, defined by the following grammar:
\begin{align*}
\psi := & \  \neg \psi \  |\  \psi \wedge \psi \  |\   \psi \vee \psi \  |\   \psi \Rightarrow \psi \  |\   \psi \Leftrightarrow \psi \  | \  \mathbb{X} \  |\   \mathbb{Y} \  |\   \dia^S \psi \  |\  \rect^S \psi \\
\mathbb{X} := & \  X_1 \  |\  X_2 \  |\  \ldots \  |\  X_p\\
\mathbb{Y} := & \  Y_1 \  |\  Y_2 \  |\  \ldots \  |\  Y_q
\end{align*}
The boolean operators are defined naturally. Let us firstly discuss the modal quantifiers $\dia^S$ and~$\rect^S$. By definition, $S$ has to be a~finitely recognizable set. We define the semantics of $\dia^S$ in the following manner: we say that $G,\vX,\vY,v\models \dia^S \psi$ iff the number of neighbours $w$ of vertex $v$ satisfying $G,\vX,\vY,w\models \psi$ belongs to $S$. The quantifier $\rect^S$ is somewhat redundant, as we say that $G,\vX,\vY,v\models \rect^S \psi$ iff $G,\vX,\vY,v\models \neg \dia^S \neg \psi$. To shorten notation we use $\dia$ for $\dia^{\N^+}$ and~$\rect$ for $\rect^{\N^+}$, where $\N^+$ is the set of positive integers. Thus, the definitions of $\dia$ and~$\rect$ coincide with the natural way of introducing these quantifiers in other modal logics: $\dia \psi$ means that $\psi$ has to be true in at least one neighbour, while $\rect \psi$ means that $\psi$ has to be true in all the neighbours. Observe that the evaluation of the formula can be viewed as a~process of walking on the graph --- each time we evaluate a~modal quantifier we move to a~neighbour of the current vertex. Thus, after the first modal quantification there is a~well specified edge that was used to directly access the current vertex from his neighbour.

Operators $\mathbb{X}$ can be viewed as unary predicates, checking whether the vertex, in which the formula is evaluated, belongs to a~particular $X_i$. Formally, $G,\vX,\vY,v\models X_i$ iff $v\in X_i$. Operators $\mathbb{Y}$ play the same role for edges --- they check, whether the edge that was used to directly access the vertex belongs to a~particular $Y_j$. Therefore, we narrow ourselves only to such formulas that use operators $\mathbb{Y}$ under some quantification.

We extend the logic to directed graphs by defining the neighbour to be a~vertex that is adjacent via an~arc, with no matter which direction. We introduce two new operators belonging to $\mathbb{Y}$: $\downarrow$ and~$\uparrow$. The $\downarrow$ operator is true iff the arc that was used to directly access the current vertex is directed towards it, while $\uparrow$ is true iff it is directed towards the neighbour. Note that the new operators are significantly different from other operators in $\mathbb{Y}$, as they are not symmetrical from the point of view of the endpoints.

\begin{remark}
In order to define the semantics of \cmgl properly, without awkwardness of edge operators, we could bind them to the model quantifiers. In this variation of \cmgl, modal quantifiers are defined as $\dia^S_\beta\psi,\rect^S_\beta\psi$ for $\beta$ being a~boolean combination of operators from $\mathbb{Y}$. The lower indices of quantifiers are the only place operators from $\mathbb{Y}$ can occur. The semantics of diamond is now defined as following: $\dia^S_\beta\psi$ is true in $v$ iff the number of edges $vw$ satisfying $\beta$, such that $\psi$ is satisfied in $w$, belongs to $S$. $\rect^S_\beta\psi$ is defined to be equivalent to $\neg\dia^S_\beta\neg\psi$. It is not hard to transform a~\cmgl formula to an~equivalent formula of this form. Having expressed all boxes by diamonds, in bottom-up manner we transform every subformula $\dia^{S_i}\psi_i$ to a~form $\dia^{S_i}\bigwedge_{j=1}^{2^q}(\beta_j\Rightarrow \gamma_j)$, where $\beta_j$ are conjunctions of $\mathbb{Y}$ operators and~their negations, expressing all possible alignments of the edge to sets $\eY_j$, while $\gamma_j$ use only $\mathbb{X}$ operators and~subformulas beginning with quantification. Obtained formula is however equivalent to a~formula
$$\bigvee_{(m_j)_{j=1}^{2^q}\ :\ \sum_{j=1}^{2^q}m_j\in F_i}\quad \bigwedge_{j=1}^{2^q}\quad \dia^{S^{m_j}_i}_{\beta_j} \gamma_j$$
for $S^{m_j}_i=\alpha_i^{-1}(m_j)$, where $S_i=\alpha_i^{-1}(F_i)$ for $\alpha_i$ being a~homomorphism mapping $\N$ into a~finite monoid $M_i$. The described variation is a~cleaner form of \cmgl, however it is much less convenient for expressing actual computational problems. 
\end{remark}

\subsection{The outer logic}\label{sec:outer}

Let an~{\emph{instance}} be a~quadruple $(G,\vFX,\vFY,\overline{k})$: a~(directed) graph $G=(V,E)$ together with a~vector of fixed subsets of vertices $\vFX$, a~vector of fixed subsets of edges $\vFY$ and~a~vector of integer parameters $\overline{k}$. In most cases the fixed sets are not used, however they can be useful to distinguish subsets of vertices or~edges of the graph that are given in the input, like, for example, terminals in the \steinertree problem. Let $\mathcal{K}$ be a~class of instances: a~set of instances with the same lengths of vectors $\vFX,\vFY,\overline{k}$. We say that $\mathcal{K}$ is expressible in \lagl iff belonging to $\mathcal{K}$ is equivalent to satisfying a~fixed formula $\ff$ of the following form:
$$\ff=\exists_{\vX}\exists_{\vY} \left [\phi \wedge \forall_v G,\vFX,\vFY,\vX,\vY,v\models \psi\right].$$
Here:
\begin{itemize}
\item $\vX$ and~$\vY$ are vectors of quantified subsets of vertices and~edges respectively;
\item $\phi$ is an~arbitrary quantifier-free arithmetic formula over the parameters, cardinalities of sets of vertices and~edges of $G$ and~cardinalities of fixed and~quantified sets;
\item $\psi$ is a~\cmgl formula evaluated on the graph $G$ supplied with all the fixed and~quantified sets.
\end{itemize}
We say that such formulas belong to \laglfull (\lagl).

\begin{example}
The \vertexcover problem, given an~undirected graph $G$ and~an~integer $k$, asks whether there exists a~set of at most $k$ vertices such that every edge has at least one endpoint in the set. This can be reformulated as following: if a~vertex is not chosen, then all its neighbours have to be chosen. Thus, the class of YES instances of \vertexcover can be expressed in \lagl using the following formula:
$$\exists_{X\subseteq V} (|X|\leq k) \wedge \forall_v G,X,v\models (\neg X \Rightarrow \rect X).$$
\end{example}

\begin{example}
The \rdomset problem, given an~undirected graph $G$ and~an~integer $k$, asks whether there exists a~set of at most $k$ vertices such that every vertex is at distance at most $r$ from a~vertex from the set. The class of YES instances of \rdomset can be expressed in \lagl using the following formula:
$$\exists_{X\subseteq V} (|X|\leq k) \wedge \forall_v G,X,v\models \underbrace{(X \vee \dia (X \vee \dia (X \vee \ldots \dia (X \vee \dia X) \ldots )))}_{r \text{ quantifications}}.$$
\end{example}

\section{Tractability of problems expressible in \lagl}\label{sec:tractability}

We are ready to prove the main result of the paper, namely the tractability of \krecognition problem. The algorithm will base on the technique of prediction, useful in the construction of more involved dynamic programming routines on various types of decompositions. For an~example, see the tractability result of Demaine et al. for \rdomset \cite{fomin} that is in fact a~prototype of the constructed algorithm.

\defproblemu{\krecognition}{An instance $I=(G,\vFX, \vFY, \overline{k})$}{Does $I\in \mathcal{K}$?}
\vspace{-0.2cm}
\begin{theorem}\label{thm:deterministic}
If the class of instances $\mathcal{K}$ is expressible in \lagl, then there exists an~algorithm that, given an~instance $I$ along with a~tree decomposition of $G$ of width $t$, solves \krecognition in time $c^t |G|^{O(1)}$ for some constant $c$. Moreover, the algorithm can also compute the number of vectors $\vX,\vY$ satisfying the formula $\ff$ defining $\mathcal{K}$.
\end{theorem}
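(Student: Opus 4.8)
The plan is to construct a bottom-up dynamic programming algorithm on the nice tree decomposition $\treedecomp$, following the ``prediction'' paradigm exemplified by the $r$-\domset algorithm of Demaine et al.\ The outer quantification is handled by brute force over the relevant parameters: we guess the cardinalities of all quantified sets $\vX, \vY$ (restricted to the values that matter for evaluating $\phi$, which is possible because $\phi$ is quantifier-free arithmetic, so only polynomially many thresholds are relevant), and for each such guess we run a dynamic program that counts the pairs $(\vX,\vY)$ of the prescribed cardinalities for which the inner part $\forall_v\, G,\vFX,\vFY,\vX,\vY,v\models\psi$ holds. Summing over the guesses consistent with $\phi$ gives the count; nonzero count answers the decision version. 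Inclusion--exclusion over the threshold vector lets us recover exact counts from ``at least'' counts if convenient. So the whole difficulty is concentrated in the inner dynamic program.

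For the inner logic, the key observation is that \cmgl is \emph{navigational}: evaluating $\psi$ at a vertex $v$ only ever walks to neighbours of the current vertex, so the truth of $\psi$ at $v$ depends on the multiset of ``types'' of the neighbours of $v$, where a type records, for every subformula of $\psi$ that begins with a modal quantifier, whether that subformula holds at the neighbour (and which sets $X_i$, $Y_j$ and arc-directions apply). Since $\psi$ is fixed, there are only constantly many subformulas and hence a constant number $\lambda$ of such types. Moreover, by Lemma~\ref{lem:finrec} every finitely recognizable set $S$ appearing in a $\dia^S$ is ultimately periodic, so for counting purposes we only need to know, for each type, the number of neighbours of that type \emph{capped modulo the period and the threshold $N$} of the relevant monoid --- a constant amount of information. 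The DP state at a bag $x$ therefore consists of: (i) for each vertex in $B_x$, a guess (``prediction'') of its final type and a partial count, capped as above, of how many of its already-revealed neighbours realize each type, together with the bits saying whether $\psi$ (and each modal subformula rooted at it) will ultimately be satisfied at that vertex; and (ii) for the cardinality bookkeeping, the sizes of $\vX\cap B_x$-style restrictions, which is handled by an extra polynomial-size coordinate per quantified set. For a vertex $v$ once it is forgotten, all its neighbours have been introduced (by the edge-introduction property of nice tree decompositions), so at the forget bag we verify that the guessed type of $v$ is consistent with the accumulated neighbour-counts and that $\psi$ holds at $v$; if not, the partial solution is discarded.

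The transitions are the standard ones. Introduce-vertex: extend the state with all possible predicted types for the new isolated vertex (its neighbour-counts start at zero). Introduce-edge $uv$: for each stored state, update the capped neighbour-count of $u$ according to the predicted type of $v$ and vice versa, and also cross-check the $\mathbb{Y}$/$\downarrow$/$\uparrow$ bits of that edge against both predictions. Forget: as above, sum over states whose forgotten vertex passes its consistency check. Join: the children share $B_x$; we require their predicted types and ultimate-satisfaction bits to agree, and we add the neighbour-counts coordinatewise (capping again), using that $E_y$ and $E_z$ partition $E_x$ so no edge is double-counted; the counts multiply. The number of states at a bag is $(\text{const})^{|B_x|}\cdot |G|^{O(1)} = c^t|G|^{O(1)}$, and each transition is computed in time polynomial in the number of states, with the join bag requiring a subset-convolution-free coordinatewise product (it is a plain pointwise combination here because matching states must be identical on the shared-bag coordinates), so the total running time is $c^t|G|^{O(1)}$ as claimed.

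The main obstacle, and the place the argument needs the most care, is setting up the neighbour-type information so that it is simultaneously (a) of constant size --- which forces us to exploit ultimate periodicity of all the sets $S$ and to work modulo the least common multiple of the periods and above the maximum threshold $N$ --- and (b) composable across join bags and finalizable at forget bags. In particular one must argue that the ``prediction'' of a vertex's type, made when it is introduced, can be checked for global consistency using only the capped per-type neighbour counts, and that nested modal quantifiers do not blow up the state: this works because the type of $v$ depends on the types of its neighbours, which in turn were predicted and will themselves be verified at their own forget bags, so the verification is purely local at each forget step. Handling the directed case adds only the two extra edge-operator bits $\downarrow,\uparrow$ to each edge check and does not change the structure. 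The bound on $c$ is a tower-free but possibly large constant depending on $|\psi|$ and the monoids, exactly as advertised in the introduction.
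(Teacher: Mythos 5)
Your proposal follows essentially the same route as the paper's proof: branch polynomially over the cardinalities of the quantified sets to discharge the arithmetic part $\phi$, then run a bottom-up dynamic program on the nice tree decomposition that stores, for each bag vertex, a prediction of which modal subformulas hold in the whole graph, a constant-size (finite-monoid, i.e.\ capped ultimately-periodic) count of already-introduced neighbours satisfying each $\psi'_i$, and the set-membership alignment, with verification at forget bags, history updates at introduce-edge bags, and addition of histories at joins. The only imprecision is calling the join a plain pointwise combination: since the neighbour-count histories of the two children add rather than coincide, the join is a convolution over pairs of child states, costing $c^{2t}|G|^{O(1)}$ time --- which is exactly how the paper implements it and still lies within the claimed single-exponential bound.
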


\begin{proof}
As was already mentioned in Section \ref{sec:preliminaries}, we may assume that the given tree decomposition is a~nice tree decomposition.

Let $\ff=\exists_{\vX}\exists_{\vY} \left [\phi \wedge \forall_v G,\vFX,\vFY,\vX,\vY,v\models \psi\right]$ be the formula defining the class $\mathcal{K}$ of instances of form $(G,\vFX,\vFY,\overline{k})$. Denote by $p_0,q_0,p_1,q_1$ lengths of vectors $\vFX$, $\vFY$, $\vX$, $\vY$ respectively. We show the algorithm for computing the number of possible solutions $\vX,\vY$; testing the outcome against zero solves the decision problem.

Firstly, the algorithm counts the cardinalities of fixed sets from vectors $\vFX,\vFY$. Then it introduces these constants into the arithmetic formula $\phi$ along with the parameters and~the numbers of vertices and~edges of $G$. Now, the algorithm branches into $(1+|V|)^{p_1}(1+|E|)^{q_1}$ subroutines: in each it fixes the expected cardinalities of quantified sets from vectors $\vX,\vY$. The algorithm executes only the branches with cardinalities satisfying $\phi$ and~at the end sums up obtained numbers of solutions. This operation yields only a~polynomial blow-up of the running time, so we may assume that the expected cardinalities of all the quantified sets are precisely determined. Let us denote by $\eiterV$, $\eiterE$ vectors of expected cardinalities of $\vX$, $\vY$ respectively.

As $\rect^S$ quantifier can be expressed by $\dia^S$ quantifier, we may assume that $\psi$ uses only $\dia^S$ quantifiers. Consider all subformulas $\psi_1,\psi_2,\ldots,\psi_l$ of $\psi$ beginning with a~quantifier, denote $\psi_i=\dia^{S_i} \psi'_i$. Let $S_i$ be defined as $\alpha_{S_i}^{-1}(F_i)$ for homomorphism $\alpha_{S_i}:\N \to M_i$, finite monoid $M_i$ and~$F_i\subseteq M_i$. Let $\Hh=\prod_{i=1}^l M_i$ be a~product monoid.

Let us denote $\Xx=\{0,1\}^{p_1}$, $\Pp=\{0,1\}^l$. Furthermore, let $\Ii=\Hh\times \Pp \times\Xx$. Intuitively, $\Ii$ is a~set of possible information that can be stored about a~vertex. The information consists of: {\emph{history}}, an~element of $\Hh$; {\emph{prediction}}, a~binary vector from $\Pp$ indicating, which formulas $\psi_i$ are predicted to be true in a~vertex; and~{\emph{alignment}}, a~binary vector from $\Xx$ indicating, to which quantified sets $\eX_i$ a~vertex belongs.

Before we proceed to the formal description of the algorithm, let us give some intuition about what will be happening. The history is an~element of the product monoid, used to count already introduced neighbours satisfying certain formulas $\psi'_i$. The additive structure on $\Hh$ enables us to update the history during introduce edge and~join steps. However, while determining satisfaction of subformulas $\psi_i$ in vertices of the graph, for the vertices in the bag we have to know their 'type' in the whole graph, not just the influence of already introduced part. Therefore, we introduce prediction: the information, which subformulas are predicted to be true in a~vertex in the whole graph. When doing updates while introducing edges we can access the predicted values, however when forgetting a~vertex we have to ensure that its history is consistent with the prediction.

Let $\cand$ be the set of solutions, i.e., pairs of vectors $\vX,\vY$ for which $\psi$ is satisfied in every vertex and~satisfying constraints imposed on cardinalities of the sets. For a~node $x$ of the tree decomposition let $s\in \Ii^{\bag{x}}$ be an~{\emph{information evaluation}}. We denote $s(v)=(h_v,\pi_v,b_v)$, where $v\in \bag{x}$. Let $\iterV$, $\iterE$ be vectors of integers of lengths $p_1,q_1$ respectively, satisfying $0\leq \iterv_i\leq \eiterv_i$ and~$0\le \itere_j\leq \eitere_j$ for all $1\leq i\leq p_1$, $1\leq j\leq q_1$. Let us define $\cand_x(\iterV,\iterE,s)$: the set of partial solutions consistent with vectors $\iterV,\iterE$ and~information evaluation $s$. By this we mean the set of pairs of vectors $\vX,\vY$ of subsets of vertices and~edges of $\subgraph{x}$ respectively, such that the following conditions are satisfied.
\begin{itemize}
\item $|X_i|=\iterv_i$ for $1\leq i\leq p_1$, $|Y_j|=\itere_j$ for $1\leq j\leq q_1$. 
\item Every $v\in \bag{x}$ belongs to exactly those $\eX_i$, for which the $i$-th coordinate of $b_v$ is $1$.
\item In all $v\in \subgraph{x}\setminus\bag{x}$ the formula $\psi$ is satisfied, when evaluated in $\subgraph{x}$ supplied with quantified and~fixed sets. However, when evaluating some formula $\psi_j$ in a~vertex $w\in \bag{x}$ we access the corresponding coordinate in the prediction $\pi_w$ instead of actually evaluating it in $\subgraph{x}$.
\item For all $v\in \bag{x}$ the number of neighbours of $v$ in $\subgraph{x}$ satisfying the formula $\psi'_i$ maps in $\alpha_{S_i}$ to the $i$-th coordinate of $h_v$, where $\psi'_i$ is evaluated in the neighbour as if it was accessed directly from $v$. Again, boolean values of formulas $\psi_j$ in the vertices of $\bag{x}$ are taken from the prediction instead of truly evaluated.
\end{itemize}
Observe that $\cand=\cand_r(\overline{\mathbf{x}},\overline{\mathbf{y}},\emptyset)$. The number of possible vectors $\iterV,\iterE$ and~information evaluations $s$ is bounded by $|\Ii|^t|G|^{O(1)}$, so it suffices to show a~dynamic program that computes $A_x(\iterV,\iterE,s)=|\cand_x(\iterV,\iterE,s)|$ for all possible arguments in a~bottom-up fashion. It is not hard to implement the performance of the routine for every type of a~bag. The details of an~algorithm running in $|\Ii|^{2t}|G|^{O(1)}$ time are described in Appendix~\ref{sec:tractability-details}.

\end{proof}

\section{Adding connectivity requirements}\label{sec:connectivity}

We extend \lagl by connectivity requirements. We say that an~arithmetic formula $\phi(\overline{x},y)$ is {\emph{monotone}} over $y$ iff $\phi(\overline{x},y)\Rightarrow \phi(\overline{x},y')$ for $y\geq y'$. In \laglc, the arithmetic formula $\phi$ can also depend on $|\compnomark(\eFX_i)|$, $|\compnomark(\eFY_j)|$, $|\compnomark(\eX_i)|$, $|\compnomark(\eY_j)|$, vectors of numbers of connected components of fixed and~quantified sets. The dependence on the quantified part, variables $|\compnomark(\eX_i)|$ and~$|\compnomark(\eY_j)|$, is however restricted to be monotone, i.e., if $y$ is the variable of $\phi$ that corresponds to the number of connected components of some quantified set, then $\phi$ has to be monotone over $y$. The need of monotonicity can be justified by a~number of lower bounds for problems involving maximization of the number of connected components, due to Cygan et al.~\cite{my}.

It appears that we can combine the Cut\&Count technique with the dynamic programming routine described in Section~\ref{sec:tractability} in order to obtain similar tractability of problems defined in \laglc. Unfortunately, application of the technique gives us the tractability of only the decision problem. To the best of author's knowledge, extending the Cut\&Count technique to counting problems is an~open question, posted in~\cite{my}.

\begin{theorem}\label{thm:randomized}
If the class of instances $\mathcal{K}$ is expressible in \laglc, then there exists a~Monte-Carlo algorithm that, given the instance $I$ along with a~tree decomposition of $G$ of width $t$, solves \krecognition in time $c^t |G|^{O(1)}$ for some constant $c$. The algorithm cannot produce false positives and~produces false negatives with probability at most $\frac{1}{2}$. 
\end{theorem}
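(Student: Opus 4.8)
The plan is to combine the dynamic program of Theorem~\ref{thm:deterministic} with the Cut\&Count framework of Cygan et al.~\cite{my}, using the standard isolation-lemma machinery to turn a counting-modulo-2 statement into a Monte-Carlo decision procedure. First I would perform the same reductions as in the proof of Theorem~\ref{thm:deterministic}: assume the tree decomposition is nice, read off the cardinalities of the fixed sets, and branch over the expected cardinalities $\eiterV,\eiterE$ of the quantified sets, so that the arithmetic formula $\phi$ is reduced to a fixed conjunction of cardinality constraints and connected-component constraints; the only new phenomenon is that $\phi$ now also constrains $|\conncomp(\eX_i)|$, $|\conncomp(\eY_j)|$, and this dependence is monotone. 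Because of monotonicity, a constraint of the form ``$|\conncomp(\eX_i)| \leq c_i$'' is equivalent to ``$\subgraph{r}[\eX_i]$ has a connected subgraph'' structure that Cut\&Count is designed to detect: we guess a set of at most $c_i$ ``roots'' (one representative per allowed component) for each such set, add the root information to the state, and ask whether the rest of $\eX_i$ (resp.\ $\eY_j$) can be partitioned so that every vertex/edge is connected within $\subgraph{r}$ to one of the chosen roots. Guessing the roots costs only $|V|^{O(1)}$ per set, which is absorbed into $|G|^{O(1)}$.

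Next I would set up the cut part. Following~\cite{my}, after guessing the roots one assigns to every non-root vertex of $V$ a side in $\{L,R\}$, and counts pairs $(\text{solution},\text{consistent cut})$ where ``consistent'' means every root is on side $L$ and every edge of the relevant induced subgraphs is non-crossing. The key algebraic fact is that a candidate solution in which some set $\eX_i$ (or $\eY_j$) is \emph{not} connected to its designated roots contributes an even number of consistent cuts (the components not containing a root can be flipped freely), whereas a genuinely connected solution contributes exactly one cut per choice; hence the number of $(\text{solution},\text{cut})$ pairs, taken modulo $2$, has the same parity as the number of solutions that actually satisfy the connectivity constraints --- provided we first make the number of the latter odd. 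To achieve that, apply the Isolation Lemma: assign random weights from $\{1,\dots,2|V|+2|E|\}$ to vertices and edges, stratify the whole computation by the total weight $W$ of the quantified sets (an extra $|G|^{O(1)}$ factor), and with probability at least $1/2$ there is a unique minimum-weight solution in the stratum of interest, so its contribution is not cancelled.

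The engine that does the counting modulo $2$ is precisely the dynamic program of Theorem~\ref{thm:deterministic}, run over $\Z_2$ and with the state set $\Ii = \Hh \times \Pp \times \Xx$ augmented by: the side $\{L,R\}$ of each bag vertex, a bit per connectivity-constrained set recording whether that bag vertex's component has already met a root, and the current weight contribution. All of this multiplies the state space by a constant depending only on the formula (and on the $c_i$'s, which are constants of the fixed formula) times $|G|^{O(1)}$ for the weight coordinate, so the running time stays $c^t|G|^{O(1)}$. The transitions at introduce-vertex, introduce-edge, forget and join bags combine the \cmgl-bookkeeping of Theorem~\ref{thm:deterministic} (history updates, prediction checks on forget, componentwise sums on join) with the Cut\&Count bookkeeping from~\cite{my} (forbidding monochromatic-violating edges, propagating root-met bits, halving at joins); these are orthogonal and can be handled coordinatewise. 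At the root we read $A_r$ modulo $2$ summed over all admissible choices of roots and over the isolating weight stratum; a nonzero value certifies a YES instance, and every YES instance is detected with probability $\geq 1/2$, with no false positives. The main obstacle I expect is the bookkeeping at the \textbf{join bags}: one must show that the convolution that merges two children is simultaneously correct for the monoid-valued histories of \cmgl (which are simple sums) and for the Cut\&Count component-counting (which requires the ``number of components'' to behave subadditively and the root-met bits to be OR-combined while avoiding double counting of the shared bag), and that the monotonicity hypothesis on $\phi$ is exactly what licenses replacing ``$=$'' by ``$\leq$'' in the component constraints so that the single-cut-per-connected-solution cancellation argument goes through; verifying that these two accounting schemes compose without interference, and that no non-monotone use of $|\conncomp(\cdot)|$ sneaks in, is the delicate point.
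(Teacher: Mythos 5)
Your overall strategy (reuse the Theorem~\ref{thm:deterministic} dynamic program over $\Z_2$, relax the component constraints to inequalities via monotonicity of $\phi$, add cut sides to the state, stratify by a random weight and invoke the Isolation Lemma, and argue that solutions violating connectivity cancel modulo $2$) is the same as the paper's. But the concrete mechanism you propose for enforcing ``at most $c_i$ components'' has a genuine gap: you explicitly guess a set of at most $c_i$ roots per constrained set and claim this costs $|V|^{O(1)}$ because ``the $c_i$'s are constants of the fixed formula.'' They are not. The arithmetic formula $\phi$ may compare $|\conncomp(\eX_i)|$ or $|\conncomp(\eY_j)|$ with an input parameter (e.g.\ {\sc{Min Cycle Cover}} requires $|\conncomp(Y)|\leq k$), so after the branching step the bounds $\eitercv_i,\eiterce_j$ can be as large as $|V|$, and exhaustive guessing of root sets costs $\binom{|V|}{c_i}$, destroying the $c^t|G|^{O(1)}$ bound. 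This is exactly why the paper replaces explicit roots by \emph{marker sets}: the markers are not guessed but summed over inside the dynamic program, with their cardinalities carried as polynomial-size accumulators $\itercV,\itercE$, and with markers assigned to vertices only at forget bags (and to edges at introduce-edge bags) to avoid double counting at joins.

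A second, related problem is your aggregation ``read $A_r$ modulo $2$ summed over all admissible choices of roots.'' If the root/marker choice is summed over inside the parity count, then a single genuine solution is counted once per admissible root set, and that multiplicity can be even, so a unique minimum-weight solution in the sense of your isolation step (weights on vertices and edges only) can still cancel and you lose the $1/2$ detection guarantee. The paper avoids this by making the weight function depend on marker membership --- $\omega$ is defined on $(V\times\{0,1,2\}^{p_1})\cup(E\times\{0,1,2\}^{q_1})$ --- so the Isolation Lemma isolates the whole tuple (solution \emph{together with} its marker sets), and then Lemma~\ref{lem:evencancel} gives $|\objs_\targetW|\equiv|\sols_\targetW|\pmod 2$. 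Minor further points: in Cut\&Count no ``root-met'' bit per bag vertex is needed (the cancellation argument replaces any explicit connectivity tracking, and such a bit cannot be maintained soundly when components merge anyway), and the per-branch error must be driven down (the paper repeats each branch logarithmically many times and takes a union bound) before one can claim overall false-negative probability at most $\frac{1}{2}$.
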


The proof is a~quite straightforward translation of the proof of Theorem \ref{thm:deterministic} to the language of Cut\&Count. For the sake of completeness, it can be found in Appendix~\ref{sec:con-proof}.

\section{The necessity of acyclicity}\label{sec:negative}

We prove the intractability results for two expository non-acyclic problems.

\noindent \defproblemu{\lvertexdeletion}{An undirected graph $G$ and~an~integer $k$}{
	Is it possible to remove at most $k$ vertices from $G$ so that the remaining
	vertices induce a~graph without cycles of length $l$?}
\noindent \defproblemu{\ltlvertexdeletion}{An undirected graph $G$ and~an~integer $k$}{
	Is it possible to remove at most $k$ vertices from $G$ so that the remaining
	vertices induce a~graph without cycles of length at most $l$?}
\vspace{-0.5cm}
\begin{theorem}\label{thm:negative}
Assuming ETH, there is no $2^{o(p^2)} |G|^{O(1)}$ time algorithm for \lvertexdeletion nor for \ltlvertexdeletion for any $l\geq 5$.
The parameter $p$ denotes the width of a~given path decomposition of the input graph.
\end{theorem}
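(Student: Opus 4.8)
The plan is to reduce from \tcnfsat (or, more conveniently, from a suitably structured CSP) so that a formula on $n$ variables and $m$ clauses is transformed into an instance of \lvertexdeletion (respectively \ltlvertexdeletion) whose natural path decomposition has width $O(\sqrt{n+m})$. If there were a $2^{o(p^2)}|G|^{O(1)}$ algorithm for the deletion problem, substituting $p = O(\sqrt{n+m})$ would give a $2^{o(n+m)}$ algorithm for \tcnfsat, contradicting ETH. The quadratic gap between the ETH parameter and the pathwidth parameter is exactly what makes the $2^{o(p^2)}$ form of the lower bound natural, and it is obtained by the standard square-root trick: group the $n$ variables into $\sqrt{n}$ blocks of $\sqrt{n}$ variables each, and encode an assignment to one block by a single gadget whose ``state space'' has size $2^{\sqrt n}$ but which contributes only $O(\sqrt n)$ to the pathwidth.

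\textbf{Key steps.} First I would fix the gadget that encodes, in a ``column'' of the construction, a choice of one out of $2^{\sqrt n}$ assignments to a block of variables; since every forbidden structure in our problem is a short cycle (length exactly $l$, or length at most $l$, with $l\ge 5$), this gadget must be built from paths and short cycles so that removing a specified small set of its vertices kills all cycles of the relevant length, and the legal ``minimum'' deletion sets are in bijection with the $2^{\sqrt n}$ block-assignments. Second, I would lay these column gadgets in a row, one column per block of variables and interleaved with columns representing groups of clauses, and wire consecutive columns together with propagation gadgets so that the chosen block-assignments are consistently carried along the row; consistency is enforced by creating a cycle of the forbidden length whenever two adjacent columns disagree, so that no deletion of size $k$ can repair it unless the choices agree. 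Third, the clause-checking columns are designed so that a short forbidden cycle survives precisely when the propagated assignment falsifies some clause assigned to that column; thus an overall deletion of the budgeted size $k$ exists if and only if the formula is satisfiable. Fourth, I would exhibit an explicit path decomposition that sweeps the row of columns left to right, keeping in each bag only the current column gadget plus the $O(\sqrt n)$ ``interface'' vertices connecting it to its neighbour, giving width $O(\sqrt n + \sqrt m) = O(\sqrt{n+m})=O(p)$ and hence $p^2 = O(n+m)$. Finally I would do the two correctness directions and the arithmetic: a satisfying assignment yields a deletion set of exactly the budgeted size, and conversely any deletion set of that size must, by a counting/exchange argument on each gadget, correspond to a consistent block-assignment that satisfies every clause.

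\textbf{Handling the two problem variants and all $l\ge 5$.} For \ltlvertexdeletion one must forbid \emph{all} cycles of length $\le l$, which constrains the gadgets more tightly (e.g.\ no triangles or $4$-cycles may appear as unintended byproducts of the wiring), while for \lvertexdeletion one must forbid cycles of length exactly $l$ but may freely use cycles of other lengths as ``inert'' scaffolding. I would unify the two by building all gadgets out of paths and induced cycles of length exactly $l$ (which automatically contain no shorter cycle), so that the same construction serves both problems; the hypothesis $l\ge 5$ is used to have enough room to route the propagation and selection gadgets without creating accidental short cycles, and to make the ``one forced deletion per column'' accounting tight.

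\textbf{Main obstacle.} The delicate part is the design of the selection and propagation gadgets: I need a graph piece, of pathwidth $O(\sqrt n)$, in which the inclusion-minimal sets of vertices whose removal destroys all forbidden cycles are in clean bijection with the $2^{\sqrt n}$ candidate block-assignments, \emph{and} such that wiring two consecutive pieces together neither creates an unintended forbidden cycle nor collapses two distinct choices into the same deletion set. Getting simultaneously (i) the exact bijection, (ii) no accidental short cycles in the combined graph, and (iii) a tight global budget $k$ so that the counting argument forces the desired structure, is the real content of the reduction; once these gadgets are in hand, the pathwidth bound and the ETH calculation are routine. The remaining details of the construction and the full correctness proof are deferred to Appendix~\ref{sec:negative-details}.
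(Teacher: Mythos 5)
There is a genuine gap: the entire content of the lower bound is the gadget design, and your proposal defers exactly that part. Moreover, the architecture you sketch (one column per block of $\sqrt n$ variables, clause columns interleaved, with propagation gadgets carrying the block-assignments along the row) runs into a problem you never address: a 3CNF clause mixes variables from three arbitrary blocks, so the full $n$-bit assignment must be readable across each interface between consecutive columns, yet an interface has only $O(\sqrt n)$ vertices. Deletion status of the interface vertices carries only $O(\sqrt n)$ bits; to squeeze $\Theta(n)$ bits through such a cut you must exploit the \emph{pairwise} short-path structure among the interface vertices (this is precisely why the problem has a $2^{\Theta(t^2)}$-state dynamic program), and your proposal contains no mechanism for doing so. This pair-based encoding is the heart of the paper's reduction and cannot be waved away as ``routine once the gadgets are in hand.'' Your claim that building everything from paths and induced $l$-cycles ``automatically'' prevents accidental short cycles is also unjustified: the dangerous cycles are those assembled from segments of \emph{different} gadgets meeting at shared high-degree vertices, and ruling them out requires arithmetic control of the path lengths, not just the local structure of each gadget.

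For comparison, the paper does not use blocks or propagation at all. It fixes a backbone $P=A\cup B$ of $O(\sqrt n)$ vertices kept in \emph{every} bag of the path decomposition, injectively maps each literal to a pair in $A\times B$, and attaches constant-size gadgets to these pairs: a variable gadget joining $\psi(x)$ and $\psi(\neg x)$ by paths of length $\alpha$ and a choice cycle of length $l$ through $t_x,t_{\neg x}$, and a clause gadget joining the three literal pairs by paths of length $\beta$ and a triple of $l$-cycles through $s_{S,r_1},s_{S,r_2},s_{S,r_3}$, with $\alpha=\lfloor\frac{l-1}{2}\rfloor$, $\beta=\lceil\frac{l+1}{2}\rceil$ so that $\alpha+\beta=l$, $2\beta>l$ and $2\alpha+4>l$. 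These inequalities are what guarantee that the only cycles of length at most $l$ are the intended ones (any cycle through two backbone vertices not forming a literal pair, or through three backbone vertices, is too long), and the satisfaction check is the $l$-cycle obtained by concatenating the $\alpha$-path of a variable gadget with the $\beta$-path of a clause gadget on the same pair $\psi(r)$. The budget $k=n+2m$ then forces exactly one of $t_x,t_{\neg x}$ per variable and two of the three $s$-vertices per clause, giving both correctness directions (and handling the exact-$l$ and girth versions simultaneously). The pathwidth bound is immediate since each bag is $P$ plus one constant-size gadget. If you want to salvage your outline, you would have to replace the block/propagation scheme by some such pair-encoding against a global $O(\sqrt n)$-size backbone; as written, the proposal does not yet constitute a proof.
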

As a~path decomposition of width $p$ is also a~tree decomposition of width $p$, the result is in fact stronger than analogous for treewidth instead of pathwidth. Before we proceed to the proof, note that both these problems admit a~simple $2^{O(t^2)} |G|^{O(1)}$ dynamic programming algorithm, where $t$ is the width of a~given tree decomposition. In the state, one remembers for every pair of vertices of bag $\bag{x}$, whether in $\subgraph{x}$ they can be connected via paths of length $1,2,\ldots,l-1$ disjoint with the solution.

We present a~polynomial-time reduction that given a~\tcnfsat instance: a~formula $\ff$ in \tcnf over $n$ variables and~consisting of $m$ clauses, produces a~graph $G$ along with its path decomposition of width $O(\sqrt{n})$ and~an~integer $k$, such that
\begin{itemize}
\item if $\ff$ is satisfiable then $(G,k)$ is a~YES instance of \ltlvertexdeletion;
\item if $(G,k)$ is a~YES instance of \lvertexdeletion then $\ff$ is satisfiable.
\end{itemize}
As every YES instance of \ltlvertexdeletion is also a~YES instance of \lvertexdeletion, the constructed instance $(G,k)$ is equivalent to given instance of \tcnfsat both when considered as an~instance of \lvertexdeletion and~of \ltlvertexdeletion. Thus, existence of an~algorithm for \lvertexdeletion or~\ltlvertexdeletion running in $2^{o(p^2)} |G|^{O(1)}$ time would yield an~algorithm for \tcnfsat running in $2^{o(n)} (n+m)^{O(1)}$ time, contradicting ETH. We can assume that each clause in $\ff$ contains exactly three literals by copying some of them if necessary.

Let us choose $\alpha=\lfloor \frac{l-1}{2}\rfloor$, $\beta=\lceil \frac{l+1}{2} \rceil$. Thus, following conditions are satisfied:
$2\leq \alpha< \beta$, $\alpha + \beta = l$, $2\beta>l$, $2\alpha+4>l$.

Now we show the construction of the instance. The proof of its soundness and~the bound on pathwidth can be found in Appendix~\ref{sec:negative-details}.
\vspace*{-0.5cm}
\begin{figure}
\begin{center}
\subfloat[Variable gadget $Q_x$]{
\begin{tikzpicture}[scale=0.42]
   \tikzstyle{vertex}=[circle,fill=black, minimum size=0.1cm,inner sep=0pt]

   \node[vertex] (u1) at (-2,3) {};
   \node[vertex] (v1) at (-2,-3) {};
   \node[vertex] (u2) at (2,3) {};
   \node[vertex] (v2) at (2,-3) {};

   \node[vertex] (tx) at (-1.52,2.1) {};
   \node[vertex] (tnx) at (1.52,2.1) {};

  \draw (u1) .. controls (-0.9,1.5) and (-0.9,-1.5) .. (v1);
  \draw (u2) .. controls (0.9,1.5) and (0.9,-1.5) .. (v2);
  \draw (tx) .. controls (-1,2.7) and (1,2.7) .. (tnx);
  \draw (tx) .. controls (-1,1.5) and (1,1.5) .. (tnx);

      \draw[left] (u1.west) node {$u$};
      \draw[left] (v1.west) node {$v$};
      \draw[left] (tx.west) node {$t_x$};
      \draw[right] (u2.east) node {$u'$};
      \draw[right] (v2.east) node {$v'$};
      \draw[right] (tnx.east) node {$t_{\neg x}$};

      \draw[left] (-1.26,0.0) node {$\alpha$};
      \draw[right] (1.26,0.0) node {$\alpha$};
      \draw[above] (0.0,2.54) node {$\alpha$};
      \draw[below] (0.0,1.66) node {$\beta$};

\end{tikzpicture}
}
\qquad \qquad 
\subfloat[Clause gadget $C_S$]{
\begin{tikzpicture}[scale=0.42]
   \tikzstyle{vertex}=[circle,fill=black, minimum size=0.1cm,inner sep=0pt]

    \begin{scope}[rotate=0]
        \node[vertex] (x1) at (4,2.5) {};
        \node[vertex] (y1) at (4,-2.5) {};
	\draw (x1) .. controls (3,1.8) and (3,-1.8) .. (y1);
	\draw[right] (3.3,0) node {$\beta$};
	\node[vertex] (z1) at (3.4,1.5) {};
	\draw[right] (x1) node {$u_1$};
	\draw[right] (y1) node {$v_1$};
	\draw[right] (z1) node {$s_{S,r_1}$};	
	\node (pz11) at (2.9,3.0) {};
	\node (pz12) at (2.9,2.0) {};
	\node (pz13) at (2.9,0.0) {};
	\node (pz14) at (2.9,1.5) {};
	\draw[above] (1.0,2.6) node {$\alpha$};
	\draw[below] (0.0,1.9) node {$\beta$};
    \end{scope}
    \begin{scope}[rotate=120]
        \node[vertex] (x2) at (4,2.5) {};
        \node[vertex] (y2) at (4,-2.5) {};
	\draw (x2) .. controls (3,1.8) and (3,-1.8) .. (y2);
	\draw[above] (3.3,0) node {$\beta$};
	\node[vertex] (z2) at (3.4,1.5) {};
	\draw[left] (x2) node {$u_2$};
	\draw[above] (y2) node {$v_2$};
	\draw[above] (z2) node {$s_{S,r_2}$};	
	\node (pz21) at (2.9,3.0) {};
	\node (pz22) at (2.9,2.0) {};
	\node (pz23) at (2.9,0.0) {};
	\node (pz24) at (2.9,1.5) {};
	\draw[left] (1.0,2.6) node {$\alpha$};
	\draw[right] (0.0,1.9) node {$\beta$};
    \end{scope}
    \begin{scope}[rotate=240]
        \node[vertex] (x3) at (4,2.5) {};
        \node[vertex] (y3) at (4,-2.5) {};
	\draw (x3) .. controls (3,1.8) and (3,-1.8) .. (y3);
	\draw[left] (3.5,-0.3) node {$\beta$};
	\node[vertex] (z3) at (3.4,1.5) {};
	\draw[below] (x3) node {$u_3$};
	\draw[left] (y3) node {$v_3$};
	\draw[below left] (z3) node {$s_{S,r_3}$};	
	\node (pz31) at (2.9,3.0) {};
	\node (pz32) at (2.9,2.0) {};
	\node (pz33) at (2.9,0.0) {};
	\node (pz34) at (2.9,1.5) {};
	\draw[below] (1.0,2.6) node {$\alpha$};
	\draw[above] (0.0,1.9) node {$\beta$};
    \end{scope}
	\draw (z1) .. controls (pz11) and (pz23) .. (z2);
	\draw (z1) .. controls (pz12) and (pz24) .. (z2);
	\draw (z2) .. controls (pz21) and (pz33) .. (z3);
	\draw (z2) .. controls (pz22) and (pz34) .. (z3);
	\draw (z3) .. controls (pz31) and (pz13) .. (z1);
	\draw (z3) .. controls (pz32) and (pz14) .. (z1);
\end{tikzpicture}
}
\end{center}
\end{figure}

\vspace*{-1.5cm}
\subsubsection*{Construction.} We begin the construction by creating two sets of vertices $A,B$, each consisting of $\left\lceil \sqrt{2n}\right\rceil$ vertices. As~$|A\times B|\geq 2n$, let us take any injective function $\psi: L \to A\times B$, where $L$ is the set of literals over the variables of the formula $\ff$, i.e., symbols $x$ and~$\neg x$ for all variables $x$.

For every variable $x$ we construct a~{\emph{variable gadget}} $Q_x$ in the following manner. Let $\psi(x)=(u,v)$ and~$\psi(\neg x)=(u',v')$ ($u$~and~$u'$ or~$v$~and~$v'$ may possibly coincide). Connect $u$ with $v$ and~$u'$ with $v'$ via paths of length $\alpha$. Denote the inner vertices of the paths that are closest to $u$~and~$u'$ by $t_{x}$~and~$t_{\neg x}$ respectively. Connect $t_{x}$ with $t_{\neg x}$ via two paths: one of length $\alpha$ and~one of length $\beta$. Note that these two paths form a~cycle of length $l$. The gadget consists of all the constructed paths along with vertices $u,u',v,v'$.

Now, for every clause $S=r_1\vee r_2\vee r_3$, where $r_1,r_2,r_3$ are literals, we construct the {\emph{clause gadget}} $C_S$ in the following manner. Let $\psi(r_i)=(u_i,v_i)$ for $i=1,2,3$ ($u_i$~or~$v_i$ may possibly coincide). For $i=1,2,3$ connect $u_i$ with $v_i$ via a~path of length $\beta$, and~denote inner vertices of these paths that are closest to $u_i$ by $s_{S,r_i}$. Connect each pair $(s_{S,r_1},s_{S,r_2})$, $(s_{S,r_2},s_{S,r_3})$, $(s_{S,r_3},s_{S,r_1})$ via two paths: one of length $\alpha$ and~one of length $\beta$. Thus, we connect $s_{S,r_1},s_{S,r_2},s_{S,r_3}$ by a~triple of cycles of length $l$. The gadget consists of all the constructed paths together with vertices $u_i,v_i$.

We conclude the construction by setting $k=n+2m$.

\section{Conclusions and~open problems}\label{sec:conclusions}

In this paper we introduced a~logical formalism based on modality, \laglfull, capturing majority of problems known to be tractable in single exponential time when parameterized by treewidth. We proved that testing, whether a~fixed \lagl formula is true in a~given graph, admits an~algorithm with complexity $c^t|G|^{O(1)}$, where $t$ is the width of given tree decomposition. We extended \lagl by connectivity requirements and~obtained a~similar tractability result using the Cut\&Count technique of Cygan et al.~\cite{my}. The need of modality of the logic was justified by a~negative result under ETH that two model problems with non-acyclic requirements are not solvable in $2^{o(p^2)}|G|^{O(1)}$, where $p$ is the width of a~given path decomposition.

One open question is to breach the gap in the presented negative result. For $l=3$, \lvertexdeletion is solvable in single exponential time in terms of treewidth, while for $l\geq 5$ our negative result states that such a~robust solution is unlikely. To the best of author's knowledge, for $l=4$ there are no matching lower and~upper bounds.

Secondly, there are problems that admit a~single exponential algorithm when parameterized by treewidth, but are not expressible in \lagl. One example could be \lcliquevertexdeletion, that, given a~graph $G$ along with an~integer $k$, asks whether there exists a~set of at most $k$ vertices that hits all the subgraphs $K_l$. A~dynamic program for this problem running in time $4^t|G|^{O(1)}$ can be constructed basing on the observation, that for every subclique of a~graph there has to be a~bag fully containing it. Can we find an~elegant extension of \lagl that would capture also such type of problems?

\subsubsection*{Acknowledgments}
The author would like to thank Miko\l{}aj Boja\'nczyk for invaluable help with the logical side of the paper, as well as Marek Cygan, Marcin Pilipczuk and~Jakub Onufry Wojtaszczyk for many helpful comments on the algorithmic part.

\bibliographystyle{plain}
\bibliography{cut-logic}

\begin{thebibliography}{10}

\bibitem{jochen}
Jochen Alber and Rolf Niedermeier.
\newblock Improved tree decomposition based algorithms for domination-like
  problems.
\newblock In {\em LATIN 2002: Theoretical Informatics}, volume 2286 of {\em
  Lecture Notes in Computer Science}, pages 221--233. Springer Berlin /
  Heidelberg, 2002.

\bibitem{maximisation}
Stefan Arnborg, Jens Lagergren, and Detlef Seese.
\newblock Easy problems for tree-decomposable graphs.
\newblock {\em J. Algorithms}, 12:308--340, April 1991.

\bibitem{Courcelle}
Bruno Courcelle.
\newblock The monadic second-order logic of graphs. i. recognizable sets of
  finite graphs.
\newblock {\em Information and Computation}, 85(1):12 -- 75, 1990.

\bibitem{my}
Marek Cygan, Jesper Nederlof, Marcin Pilipczuk, Micha{\l} Pilipczuk, Johan
  M.~M. van Rooij, and Jakub~Onufry Wojtaszczyk.
\newblock Solving connectivity problems parameterized by treewidth in single
  exponential time.
\newblock {\em CoRR}, abs/1103.0534, 2010.

\bibitem{fomin}
Erik Demaine, Fedor Fomin, Mohammad Hajiaghayi, and Dimitrios Thilikos.
\newblock Fixed-parameter algorithms for the $(k,r)$-center in planar graphs
  and map graphs.
\newblock In {\em Automata, Languages and Programming}, volume 2719 of {\em
  Lecture Notes in Computer Science}, pages 190--190. Springer Berlin /
  Heidelberg, 2003.

\bibitem{bidimensionality}
Erik~D. Demaine and Mohammad~T. Hajiaghayi.
\newblock The bidimensionality theory and its algorithmic applications.
\newblock {\em The Computer Journal}, 51(3):292--302, 2008.

\bibitem{subexponential-bound-genus}
Frederic Dorn, Fedor~V. Fomin, and Dimitrios~M. Thilikos.
\newblock Fast subexponential algorithm for non-local problems on graphs of
  bounded genus.
\newblock In L.~Arge and R.~Freivalds, editors, {\em 10th Scandinavian Workshop
  on Algorithm Theory, SWAT 2006}, volume 4059 of {\em Lecture Notes in
  Computer Science}, pages 172--183. Springer, 2006.

\bibitem{eppstein}
David Eppstein.
\newblock Diameter and treewidth in minor-closed graph families.
\newblock {\em Algorithmica}, 27(3):275--291, 2000.

\bibitem{fiorini}
Samuel Fiorini, Nadia Hardy, Bruce Reed, and Adrian Vetta.
\newblock Planar graph bipartization in linear time.
\newblock In {\em In Proc. 2nd GRACO, Electronic Notes in Discrete
  Mathematics}, pages 265--271. Elsevier, 2005.

\bibitem{grohe:book}
J\"org Flum and Martin Grohe.
\newblock {\em Parameterized Complexity Theory}.
\newblock Texts in Theoretical Computer Science. Springer, 2006.

\bibitem{Fomin06ontwo}
Fedor~V. Fomin, Serge Gaspers, Saket Saurabh, and Alexey~A. Stepanov.
\newblock On two techniques of combining branching and treewidth.
\newblock {\em Algorithmica}, 54(2):181--207, 2009.

\bibitem{revisited}
Markus Frick and Martin Grohe.
\newblock The complexity of first-order and monadic second-order logic
  revisited.
\newblock In {\em Annals of Pure and Applied Logic}, pages 215--224, 2002.

\bibitem{kleinberg-tardos}
Jon Kleinberg and {\'E}va Tardos.
\newblock {\em Algorithm Design}.
\newblock Addison-Wesley, 2005.

\bibitem{Kloks94}
Ton Kloks.
\newblock {\em Treewidth, Computations and Approximations}, volume 842 of {\em
  Lecture Notes in Computer Science}.
\newblock Springer, 1994.

\bibitem{treewidth-lower}
Daniel Lokshtanov, Daniel Marx, and Saket Saurabh.
\newblock Known algorithms on graphs of bounded treewidth are probably optimal.
\newblock In {\em 22st Annual ACM-SIAM Symposium on Discrete Algorithms, SODA
  2011}, 2011.
\newblock Accepted for publication, to appear.

\bibitem{marx:superexp}
Daniel Lokshtanov, Daniel Marx, and Saket Saurabh.
\newblock Slightly superexponential parameterized problems.
\newblock In {\em 22st Annual ACM-SIAM Symposium on Discrete Algorithms, SODA
  2011}, 2011.
\newblock Accepted for publication, to appear.

\bibitem{enumerate-and-expand}
Daniel M{\"o}lle, Stefan Richter, and Peter Rossmanith.
\newblock Enumerate and expand: Improved algorithms for connected vertex cover
  and tree cover.
\newblock {\em Theory of Computing Systems}, 43(2):234--253, 2008.

\bibitem{isolation}
Ketan Mulmuley, Umesh~V. Vazirani, and Vijay~V. Vazirani.
\newblock Matching is as easy as matrix inversion.
\newblock {\em Combinatorica}, 7(1):105--113, 1987.

\bibitem{niedermeier:book}
Rolf Niedermeier.
\newblock {\em Invitation to Fixed-Parameter Algorithms}, volume~31 of {\em
  Oxford Lecture Series in Mathematics and its Applications}.
\newblock Oxford University Press, 2006.

\bibitem{rs:minors3}
Neil Robertson and Paul~D. Seymour.
\newblock Graph minors. {III}. {P}lanar tree-width.
\newblock {\em Journal of Combinatorial Theory, Series B}, 36(1):49--64, 1984.

\bibitem{johan-tw}
Johan M.~M. van Rooij, Hans~L. Bodlaender, and Peter Rossmanith.
\newblock Dynamic programming on tree decompositions using generalised fast
  subset convolution.
\newblock In A.~Fiat and P.~Sanders, editors, {\em 17th Annual European
  Symposium on Algorithms, ESA 2009}, volume 5757 of {\em Lecture Notes in
  Computer Science}, pages 566--577. Springer, 2009.

\bibitem{rooij:inclusion/exclusion}
Johan M.~M. van Rooij, Jesper Nederlof, and Thomas~C. van Dijk.
\newblock Inclusion/exclusion meets measure and conquer.
\newblock In A.~Fiat and P.~Sanders, editors, {\em 17th Annual European
  Symposium on Algorithms, ESA 2009}, volume 5757 of {\em Lecture Notes in
  Computer Science}, pages 554--565. Springer, 2009.

\bibitem{weyer}
Mark Weyer.
\newblock {\em Modifizierte parametrische Komplexit{\"a}tstheorie}.
\newblock Dissertation, Albert-Ludwigs-Universit{\"a}t Freiburg, 2007.

\end{thebibliography}

\newpage

\appendix

\section{Proof of Lemma~\ref{lem:finrec}}\label{sec:logic-proof}

\begin{lemma}[Lemma \ref{lem:finrec}, restated]
A set $S\subseteq \N$ is finitely recognizable iff it is ultimately periodic, i.e. there exist positive integers $N,k$ such that for all $n\geq N$ the following holds: $n\in S \Leftrightarrow n+k\in S$.
\end{lemma}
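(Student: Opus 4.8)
The plan is to prove both implications of the equivalence between finite recognizability and ultimate periodicity.

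For the ``only if'' direction, suppose $S = \alpha_S^{-1}(F)$ for a finite monoid $M$, a set $F \subseteq M$, and a homomorphism $\alpha_S : \N \to M$. Since $\N$ is generated by $1$, the homomorphism is determined by $g := \alpha_S(1)$, and $\alpha_S(n) = ng = g + g + \cdots + g$ ($n$ times), with $\alpha_S(0) = e_M$. First I would consider the sequence $g, 2g, 3g, \ldots$ of elements of $M$. Since $M$ is finite, by pigeonhole there exist $0 \le N_0 < N_0 + k$ with $N_0 g = (N_0 + k)g$ and $k \geq 1$; taking the first such collision one gets that for all $n \ge N_0$ we have $ng = (n+k)g$, hence $\alpha_S(n) = \alpha_S(n+k)$, and therefore $n \in S \Leftrightarrow n+k \in S$. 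Setting $N = N_0$ (or $N = \max(N_0,1)$ to meet the positivity requirement) finishes this direction.

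For the ``if'' direction, suppose $N, k$ are positive integers with $n \in S \Leftrightarrow n+k \in S$ for all $n \ge N$. The idea is to build a finite monoid that records ``the value of $n$ up to the point where the period kicks in.'' Concretely, I would take $M$ to be the set $\{0, 1, \ldots, N+k-1\}$ with the operation $a \oplus b = a + b$ if $a + b < N + k$, and $a \oplus b = N + ((a + b - N) \bmod k)$ otherwise; one checks this is associative with identity $0$, i.e.\ a finite monoid. Define $\alpha_S : \N \to M$ by $\alpha_S(n) = n$ for $n < N+k$ and $\alpha_S(n) = N + ((n - N) \bmod k)$ for $n \ge N$; equivalently $\alpha_S(n) = n \oplus 0 \oplus \cdots$, and one verifies $\alpha_S(m + n) = \alpha_S(m) \oplus \alpha_S(n)$ so that $\alpha_S$ is a homomorphism. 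Finally let $F = \{ m \in M : m \in S \}$ (using that $M \subseteq \N$); ultimate periodicity of $S$ guarantees that $n \in S \Leftrightarrow \alpha_S(n) \in S \Leftrightarrow \alpha_S(n) \in F$, so $S = \alpha_S^{-1}(F)$.

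The only mildly delicate point — the ``main obstacle'', such as it is — is verifying that the operation $\oplus$ on $\{0, \ldots, N+k-1\}$ is genuinely associative in the wrap-around regime and that $\alpha_S$ respects it; this is a routine but slightly fiddly case analysis comparing $(a \oplus b) \oplus c$ with $a \oplus (b \oplus c)$ according to whether partial sums have crossed the threshold $N+k$. Everything else is immediate from the pigeonhole principle and the definitions, which is why the statement is rightly called folklore.
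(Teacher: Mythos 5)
Your proposal is correct, and your ``if'' direction is literally the paper's construction: the same monoid on $\{0,\ldots,N+k-1\}$ with the threshold-and-wraparound operation, the same homomorphism determined by $\alpha_S(1)=1$, and $F=M\cap S$; as in the paper, the only work is the routine associativity/homomorphism check you flag. The one place you diverge is the ``only if'' direction: the paper invokes the idempotent power $\omega$ of the finite monoid (for every $a$, the $\omega$-fold sum of $a$ is idempotent) and takes $N=k=\omega$ uniformly, whereas you apply the pigeonhole principle directly to the sequence $g,2g,3g,\ldots$ with $g=\alpha_S(1)$ to find a collision $N_0g=(N_0+k)g$, which then propagates to all $n\geq N_0$ since $ng=(n-N_0)g+N_0g$. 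Both arguments are sound and exploit the same finiteness phenomenon; yours is slightly more elementary (no appeal to the idempotent-power fact), while the paper's gives the period and threshold in one stroke as the idempotent power of $M$. Either way the lemma stands, so this is essentially the same proof with a cosmetic difference in how eventual periodicity of $n\mapsto\alpha_S(n)$ is extracted.
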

\begin{proof}
Assume that $S$ is finitely recognizable. Let $M$ be a~finite monoid and~$\alpha_S: \N \to M$ a~homomorphism such that $S=\alpha_S^{-1}(F)$ for some $F\subseteq M$. Recall that for every finite monoid $N$ there exists such a~number $\omega$, called the {\em{idempotent power}}, that for every $a\in N$ the element $\underbrace{a+\ldots+a}_{\omega}$ is an~idempotent, i.e. $\underbrace{a+\ldots+a}_{2\omega}=\underbrace{a+\ldots+a}_{\omega}$. Let $\omega$ be the idempotent power in $M$. We claim that we can take $N=k=\omega$. Indeed, if $n\geq \omega$, then 
\begin{align*}
\alpha_S(n)=\underbrace{\alpha_S(1)+\ldots+\alpha_S(1)}_{n}=\underbrace{\alpha_S(1)+\ldots+\alpha_S(1)}_{\omega}+\underbrace{\alpha_S(1)+\ldots+\alpha_S(1)}_{n-\omega}=\\ 
\underbrace{\alpha_S(1)+\ldots+\alpha_S(1)}_{2\omega}+\underbrace{\alpha_S(1)+\ldots+\alpha_S(1)}_{n-\omega}=\alpha_S(n+\omega).
\end{align*}

Now assume that we have positive integers $N,k$ such that $n\in S\Leftrightarrow n+k\in S$ for $n\geq N$. Take $M$ to be a~monoid over $\{0,1,\ldots,N+k-1\}$ with the operation $+_M$ defined as follows:
$$a+_Mb =\begin{cases} a+b & \quad \hbox{ if } a+b<N+k,\\
N+((a+b-N) \hbox{ mod } k) & \quad \hbox{otherwise.}
\end{cases}$$
It is easy to verify that it is indeed a~monoid. Furthermore, let us define homomorphism $\alpha_S: \N\to M$ by setting $\alpha_S(0)=0, \alpha_S(1)=1$ and~extending it naturally. A~straightforward check proves that $S=\alpha_S^{-1}(\{0,1,\ldots,N+k-1\}\cap S)$.
\end{proof}

\section{Details of the dynamic program from the proof of Lemma~\ref{lem:finrec}}\label{sec:tractability-details}

We present, how the computation of values $A_x(\iterV,\iterE,s)$ should be performed for every type of a~bag in a~bottom-up fashion, in order to obtain a~$|\Ii|^{2t}|G|^{O(1)}$ algorithm.

The lenghts binary representations of values $A_x(\iterV,\iterE,s)$ are bounded by a~polynomial in the size of input, hence arithmetic operations during the computation can be carried out in polynomial time. We follow convention that all values $A_x$ with improper arguments, for example having negative coordinates, are defined to be zeroes. For a~condition $c$, by $[c]$ we denote $1$ if $c$ is true, and~$0$ otherwise. Moreover, for a~function $s$ by $s[v \to \alpha]$ we denote the function $s \setminus \{(v,s(v))\} \cup \{(v,\alpha)\}$. Note that this definition is correct even when $s$ is not defined on $v$.

\vskip 0.2cm
\noindent \textbf{Leaf bag $x$}:
	$$ A_x(\overline{0},\overline{0},\emptyset) = 1 $$
	$\overline{0}$ is a~vector of zeroes of appropriate length. All other values $A_x(\iterV,\iterE,\emptyset)$ are zeroes.
\vskip 0.2cm
\noindent\textbf{Introduce vertex $v$ bag $x$, with child $y$}:
	$$ A_x(\iterV,\iterE,s[v \to (h,\pi,b)]) = [h=e_\Hh]A_y(\iterV-b,\iterE,s) $$
	Observe that the introduced vertex has no neighbours so far, so its history must be void, which is indicated by checking whether the assigned history is $e_\Hh$, the identity of the product monoid. However, its prediction and~alignment can be arbitrary.
\vskip 0.2cm
\noindent\textbf{Introduce edge (arc) $uv$ bag $x$, with child $y$}:
	$$ A_x(\iterV,\iterE,s) = \sum_{d\in\{0,1\}^{q_1}}\ \ \sum_{s'\in S'} A_y(\iterV,\iterE-d,s') $$
	The first summation corresponds to all possible ways of choosing the family of sets $\eY_i$ the introduced edge belongs to. In the second summation we sum over all information evaluations $s'$ such that $s'$ differs from $s$ only on histories of vertices $u,v$ in the following manner. The history $h_u$ of $u$ in $s$ is the history $h'_u$ of $u$ in $s'$, but with images of $1$ added on precisely these coordinates $j$, which correspond to formulas $\psi'_j$ satisfied in $v$, when accessed from $u$. The symmetrical condition holds for the history $h_v$ of $v$ in $s$ and~the history $h'_v$ of $v$ in $s'$. This condition corresponds to updating the history after introducing the edge. Observe that given the predictions on $u,v$ along with the information about the sets (fixed or~quantified) $u,v$ belong to and~the information about the sets (fixed or~quantified) the edge (arc) $uv$ belongs to, we can compute which formulas $\psi'_j$ are satisfied in $u$ when accessed from $v$ and~vice versa. The number of considered evaluations $s'$ is constant and~they can be enumerated in constant time, so the computation of a~single value takes constant time.
\vskip 0.2cm
\noindent\textbf{Forget vertex $v$ bag $x$, with child $y$}:
	$$ A_x(\iterV,\iterE,s) = \sum_{(h,\pi,b)\in \Gg} A_y(\iterV,\iterE,s[v\to (h,\pi,b)]) $$
	The summation corresponds to possible information stored in the vertex we are forgetting. In order to forget a~vertex without violating the definition of $\cand(\iterV,\iterE,s)$ we have to ensure that the prediction is consistent with the history and~that $\psi$ is satisfied in $v$. Therefore, the summation is carried out over a~set $\Gg$ of {\emph{good}} triples $(h,\pi,b)$, such that 
\begin{itemize}
\item $h_i\in F_{S_i}$ iff $\pi_i=1$, for all $1\leq i\leq l$; 
\item the alignment $b$ and~the satisfaction of formulas $\psi_i$ on quantification depth $0$ (stored in prediction) make the formula $\psi$ true. Note that there are no unquantified edge operators, so $b$ along with knowledge of fixed sets $\vFX$ suffices to compute this.
\end{itemize}
The set $\Gg$ can be determined in constant time, so the computation of a~single value takes constant time.
\vskip 0.2cm
\noindent\textbf{Join bag $x$, with children $y,z$}:
	$$ A_x(\iterV,\iterE,s) = \sum_{\iterV'+\iterV''=\iterV+\xi(\bag{x})} \ \ \sum_{\iterE'+\iterE''=\iterE}\ \ \sum_{s'+s''=s} A_y(\iterV',\iterE',s')A_z(\iterV'',\iterE'',s'') $$
	$s'+s''=s$ denotes that for all the vertices $v\in \bag{x}$ the predictions in $s(v),s'(v),s''(v)$ are equal, the alignments in $s(v),s'(v),s''(v)$ are equal and~the histories in $s'(v)$ and~$s''(v)$ sum up to the history in $s(v)$. The first summation corresponds to splitting the expected cardinalities of sets $\vX$ in solution, however we have to take care of double counting the elements of the bag $\bag{x}$ by adding to the right side $\xi(\bag{x})$, the sum of the alignments in $s$ over the bag $\bag{x}$. The second summation corresponds to splitting the expected cardinalities of sets $\vY$ in the solution. As~every edge is introduced exactly once, the sets $\subedges{y}$ and~$\subedges{z}$ are disjoint and~sum to $\subedges{x}$, so there is no problem with double counting the edges. Note that the number of summands considered so far is polynomial. The last summation corresponds to splitting the information. In order to be able to merge two partial solutions built under bags $y$, $z$, the alignments has to be the same in both bags $\bag{y}$, $\bag{z}$ as well as the predictions. However, the histories are defined basing on numbers of neighbours satisfying appropriate conditions and~therefore should be added. As~$(\subedges{y},\subedges{z})$ is a~partition of $\subedges{x}$, we avoid problems with double counting the neighbours. Observe that we can compute all the needed values $A_x(\iterV,\iterE,s)$ in $|\Ii|^{2t}|G|^{O(1)}$ at once. For every pair $(s',s'')$ there exists at most one information evaluation $s$ such that $s'+s''=s$. Having computed it in polynomial time for every pair $(s',s'')$, for every $s$ we can iterate through all the contributing pairs in order to evaluate the presented formula on $A_x(\iterV,\iterE,s)$. Thus, having fixed splitting of the acumulators, each pair is considered at most once.

The computation of a~single value takes constant time in leaf, introduce, introduce edge and~forget steps, while the join step can be carried out in $|\Ii|^{2t}|G|^{O(1)}$. As~in every step the algorithm computes $|\Ii|^t|G|^{O(1)}$ values and~there are polynomially many steps, the whole algorithm runs in $|\Ii|^{2t}|G|^{O(1)}$ time.

\section{Proof of Theorem \ref{thm:randomized}}\label{sec:con-proof}

Let us recall that the crucial probabilistic tool used in the Cut\&Count technique is the Isolation Lemma.

\begin{definition}
A function $\omega:U \rightarrow \mathbb{Z}$ \emph{isolates} a~set family 
$\mathcal{F} \subseteq 2^U$ if there is a~unique $S' \in \mathcal{F}$ with 
$\omega(S')= \min_{S \in \mathcal{F}}\omega(S)$. 
\end{definition}

For $X \subseteq U$, $\omega(X)$ denotes 
$\sum_{u \in X}\omega(u)$.

\begin{lemma}[Isolation Lemma, \cite{isolation}]
\label{lem:iso}
Let $\mathcal{F} \subseteq 2^U$ be a~set family over a~universe $U$ with 
$|\mathcal{F}|>0$.
For each $u \in U$, choose a~weight $\omega(u) \in \{1,2,\ldots,N\}$ 
uniformly and~independently at random.
Then
$$\mathtt{prob}[\omega \textnormal{ isolates } \mathcal{F}] \geq 1 - \frac{|U|}{N}$$
\end{lemma}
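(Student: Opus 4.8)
The plan is to bound the complementary event --- that $\omega$ \emph{fails} to isolate $\mathcal{F}$ --- by $|U|/N$, and then take complements. The first step is to reduce this failure to a statement about individual elements. Call an element $u \in U$ \emph{confused} (with respect to a fixed $\omega$) if the minimum weight $\min_{S \in \mathcal{F}} \omega(S)$ is attained both by some set containing $u$ and by some set avoiding $u$. If $\omega$ does not isolate $\mathcal{F}$, then there are two distinct minimum-weight sets $S_1 \neq S_2$; their symmetric difference $S_1 \triangle S_2$ is nonempty, so any $u \in S_1 \triangle S_2$ lies in exactly one of them and is therefore confused. Hence the failure event is contained in $\bigcup_{u \in U} \{u \text{ is confused}\}$, and by the union bound it suffices to show $\mathtt{prob}[u \text{ is confused}] \leq 1/N$ for each fixed $u$.

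Next I would bound this single-element probability using the principle of deferred decisions: fix the weights $\omega(u')$ for all $u' \neq u$ arbitrarily, treating only $\omega(u)$ as random. With the other weights frozen, define $A = \min\{\omega(S) : S \in \mathcal{F},\, u \notin S\}$ and $B = \min\{\omega(S \setminus \{u\}) : S \in \mathcal{F},\, u \in S\}$ (with the convention that an empty minimum is $+\infty$). The crucial observation is that neither $A$ nor $B$ depends on $\omega(u)$: in the first case $u \notin S$, and in the second case the term $\omega(u)$ has been explicitly removed. Since $\omega(S) = \omega(S \setminus \{u\}) + \omega(u)$ for every $S \ni u$, the global minimum weight equals $\min(A,\, B + \omega(u))$, and $u$ is confused precisely when both alternatives attain this minimum simultaneously, that is, when $A = B + \omega(u)$, i.e. $\omega(u) = A - B$.

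Since $A - B$ is a constant determined entirely by the frozen weights while $\omega(u)$ is uniform on $\{1, \ldots, N\}$ and independent of those weights, the event $\omega(u) = A - B$ has probability at most $1/N$ (exactly $1/N$ when $A - B \in \{1, \ldots, N\}$, and $0$ otherwise). As this bound holds for every fixing of the remaining weights, averaging over them gives $\mathtt{prob}[u \text{ is confused}] \leq 1/N$ unconditionally. Summing over the $|U|$ elements via the union bound yields $\mathtt{prob}[\omega \text{ fails to isolate } \mathcal{F}] \leq |U|/N$, and taking the complement gives the claimed inequality.

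I would expect the only delicate point to be the equivalence ``$u$ is confused $\iff \omega(u) = A - B$'': one must verify that when $A - B$ falls outside $\{1, \ldots, N\}$ the element genuinely cannot be confused, and handle the degenerate cases in which one of the two subfamilies (sets containing $u$, sets avoiding $u$) is empty --- there $u$ can never be confused, so its probability is simply $0$ and the bound holds trivially. Everything else is a routine conditioning-and-union-bound computation.
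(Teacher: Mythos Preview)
Your proof is correct and is in fact the standard argument for the Isolation Lemma. Note, however, that the paper does not give its own proof of this statement: it is quoted as a known result with a citation to the original source, so there is no proof in the paper to compare against.
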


We will now merge the deterministic result from Theorem \ref{thm:deterministic} with the Cut\&Count technique in order to prove Theorem \ref{thm:randomized}. We follow notation from~\cite{my} in order to make the proof easier to understand for the reader already familiar with the basics of Cut\&Count. The algorithm will be an~extension of the algorithm given by Theorem \ref{thm:deterministic}, therefore we will constantly refer to the details of its proof.

\begin{theorem}[Theorem \ref{thm:randomized}, restated]
If the class of instances $\mathcal{K}$ is expressible in \laglc, then there exists a~Monte-Carlo algorithm that, given the instance $I$ along with a~tree decomposition of $G$ of width $t$, solves \krecognition in time $c^t |G|^{O(1)}$ for some constant $c$. The algorithm cannot produce false positives and~produces false negatives with probability at most $\frac{1}{2}$. 
\end{theorem}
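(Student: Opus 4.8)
The plan is to follow the blueprint of Theorem~\ref{thm:deterministic} and replace the straightforward counting of partial solutions by a~parity count of \emph{consistently cut} solutions, in the spirit of Cut\&Count. First I would recall the Cut\&Count setup from~\cite{my}. Suppose $\ff=\exists_{\vX}\exists_{\vY}[\phi\wedge\forall_v G,\vFX,\vFY,\vX,\vY,v\models\psi]$ defines $\cK$, where $\phi$ may now additionally mention $|\compnomark(\cdot)|$, with the dependence on the quantified sets restricted to be monotone. As in the proof of Theorem~\ref{thm:deterministic}, I would first branch over the expected cardinalities $\eiterV,\eiterE$ of the quantified sets; this is only a~polynomial blow-up. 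The key observation enabling Cut\&Count is monotonicity: instead of requiring $\compnomark(\eX_i)$ (and $\compnomark(\eY_j)$) to equal a~prescribed value, it suffices to require $\phi$ to hold for \emph{some} value at most the true number of components, and~since $\phi$ is monotone over each such variable, it suffices to guess an~\emph{upper bound} $c_i$ (resp.\ $d_j$) on the number of components of each quantified set and~demand that the actual solution has at most that many components. So after a~further polynomial branching I fix target upper bounds on the numbers of connected components; the arithmetic formula $\phi$ then becomes a~fixed conjunction of cardinality constraints plus constraints of the form ``$\compnomark(\eX_i)\le c_i$''.

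Next I would apply the Isolation Lemma (Lemma~\ref{lem:iso}). Take the universe $U$ to be the disjoint union of $V(G)$ and~$E(G)$ (so a~solution $(\vX,\vY)$ is identified with a~subset of $U$ via, say, the union $\bigcup_i X_i\uplus\bigcup_j Y_j$, or~one may isolate over a~slightly richer universe encoding the coordinate each vertex/edge lies in --- the precise choice only affects constants), pick a~random weight function $\omega:U\to\{1,\dots,N\}$ with $N=2|U|$, and~for each target total weight $W$ count, modulo~$2$, the number of pairs $((\vX,\vY),\text{cut})$ where $(\vX,\vY)$ satisfies the cardinality and~$\cmgl$ constraints, has weight exactly $W$, and~the cut is a~``consistent cut'' of the relevant solution subgraphs. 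Recall the Cut\&Count parity argument: for a~subgraph with a~distinguished nonempty set of ``connected'' objects, the number of consistent cuts is $2^{\compnomark}$, so summing the $2^{\compnomark}$ contributions over all solutions and~reducing mod~$2$ kills every solution with strictly more than the minimum number of components of each requested type, and~isolation guarantees that for the minimum weight $W$ a~unique minimizer survives --- hence a~nonzero parity at some $W$ if and~only if $\cK$ is a~yes-instance. The algorithm answers YES iff some $W$ yields an~odd count; correctness (no false positives, false negatives with probability $\le\frac12$, after possibly a~constant number of repetitions) follows verbatim from~\cite{my} together with the isolation bound $1-|U|/N\ge\frac12$.

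The remaining and~technically most delicate step is to run this mod-$2$ count by a~dynamic program over the nice tree decomposition, i.e.\ to upgrade the tables $A_x(\iterV,\iterE,s)$ of Theorem~\ref{thm:deterministic}. I would enrich the information evaluation $s\in\Ii^{\bag x}$ at each node $x$ by: (i) for every vertex in $\bag x$, its side of the cut in each of the (constantly many) subgraphs whose connectivity is being constrained --- this multiplies $|\Ii|$ by a~constant depending only on the formula; and~(ii) an~accumulator for the current total weight $W$, taking at most $N|U|=O(|G|^2)$ values, which only contributes a~polynomial factor. The transition rules are obtained by superimposing the consistent-cut bookkeeping of~\cite{my} (introduce vertex: pick its side; introduce edge: forbid monochromatic-with-wrong-side configurations / update the cut consistency; forget: as before, with the additional ``pick side'' only mattering at introduce; join: the cut side must agree on $\bag x$ and~weights add) onto the transitions already spelled out in Appendix~\ref{sec:tractability-details}, with all arithmetic now carried out in $\Z_2$. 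Because the extra coordinates inflate $|\Ii|$ only by an~$\ff$-dependent constant and~the weight accumulator only polynomially, the join step still runs in $c^t|G|^{O(1)}$ and~hence so does the whole algorithm, giving the claimed bound. The main obstacle, and~the part requiring care rather than ingenuity, is to verify that the ``prediction'' mechanism of Theorem~\ref{thm:deterministic} --- which decouples the evaluation of the $\cmgl$ subformulas $\psi_i$ at bag vertices from the rest of the graph --- interacts cleanly with the cut: one must check that consistency of a~cut is a~purely local, edge-by-edge condition (it is, since a~cut is consistent iff no introduced edge joins two vertices whose designated ``connected'' objects lie on opposite sides) and~therefore can be enforced at introduce-edge bags exactly as in~\cite{my}, without any conflict with the history/prediction updates, which touch a~disjoint part of the state. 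Given that, the correctness proof is a~routine merge of the invariant of Theorem~\ref{thm:deterministic} with the invariant of the Cut\&Count dynamic program of~\cite{my}, and~I would present it as such, referring to Appendix~\ref{sec:tractability-details} for the unchanged parts.
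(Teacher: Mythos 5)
There is a genuine gap in the Cut part of your argument: you never introduce markers (nor any anchoring of components), and the parity argument as you state it does not work. If one counts pairs (candidate solution, tuple of consistent cuts) modulo $2$, then each candidate contributes $\prod_i 2^{\compnomark(\eX_i)}\cdot\prod_j 2^{\compnomark(\eY_j)}$ objects, which is even as soon as any quantified set is nonempty; so the count is identically zero, and the claim that reducing mod $2$ ``kills every solution with strictly more than the minimum number of components'' is false --- mod-$2$ cancellation has nothing to do with minimizing the number of components (the Isolation Lemma isolates a minimum-\emph{weight} solution, not a minimum-component one), and nothing in your scheme enforces the guessed bounds $\compnomark(\eX_i)\le c_i$, $\compnomark(\eY_j)\le d_j$ at all. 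The paper repairs exactly this with marker sets: a candidate is a quadruple $(\vX,\vY,\Xmarkers,\Ymarkers)$ with $\xmarkers_i\subseteq\eX_i$, $|\xmarkers_i|\le c_i$ (analogously for edge sets), an object additionally requires all markers to lie on side $1$ of the corresponding cut, and then the number of objects per candidate equals $2^{u}$ where $u$ is the number of \emph{unmarked} components; this is odd exactly when every component contains a marker, which is what encodes the monotone bound on $\compnomark$. This in turn dictates the isolation universe: weights must be assigned to pairs of a vertex/edge together with its membership-and-marker pattern, i.e.\ $V\times\{0,1,2\}^{p_1}\cup E\times\{0,1,2\}^{q_1}$, so that distinct tuples (including the marker choice) get distinct weights. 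Your choice $U=V\uplus E$ with a solution identified with the union of its sets does not even separate two solutions having equal unions, and your parenthetical ``richer universe encoding the coordinate'' still ignores markers, so isolation over the family of marked solutions is not guaranteed.

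On the dynamic-programming side your state enrichment (a cut side for each constrained quantified set per bag vertex, a polynomial-size weight accumulator, arithmetic in $\Z_2$, cut consistency checked edge-by-edge at introduce-edge bags, independence from the history/prediction coordinates) is the right shape and matches the paper's $\Xx$- and $\Yy$-alignments; note however that for edge sets the side information must be three-valued (undecided until an incident $\eY_j$-edge is introduced), and once markers are added you also need accumulators for the numbers of markers used and you must commit to markership at carefully chosen steps --- the paper does so at forget bags for vertices and at introduce-edge bags for edges precisely to avoid double counting at join bags. A minor further point: to push the overall false-negative probability down to $\frac12$ you repeat each of the polynomially many branches a logarithmic (not constant) number of times and apply a union bound.
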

\begin{proof}
As was already mentioned in Section \ref{sec:preliminaries}, we may assume that the given tree decomposition is a~nice tree decomposition.

We will follow the same notation as in the proof of Theorem \ref{thm:deterministic}: the class $\mathcal{K}$ is defined by the formula $\ff=\exists_{\vX}\exists_{\vY} \left [\phi \wedge \forall_v G,\vFX,\vFY,\vX,\vY,v\models \psi\right]$ and~we are given an~instance $(G,\vFX,\vFY,\overline{k})$ together with the tree decomposition of $G$ of width $t$. $p_0,q_0,p_1,q_1$ are lengths of vectors $\vFX$, $\vFY$, $\vX$, $\vY$ respectively.

Firstly the algorithm computes cardinalities and~numbers of connected components of fixed sets $\vFX,\vFY$. These constants along with the parameters and~the numbers of vertices and~edges of the graph are introduced into the arithmetic formula $\phi$. Then, the algorithm branches into $(1+|V|)^{p_1}(1+|E|)^{q_1} \cdot (1+|V|)^{p_1}(1+|V|)^{q_1}$ subroutines, in each fixing the expected cardinalities and~numbers of connected components of all the quantified sets. The algorithm executes only these branches, for which the formula $\phi$ is satisfied.  Thus, the number of branches is polynomial. Every branch will return a~false negative with probability at most $\frac{1}{2}$. Therefore, we independently run every branch a~logarithmic number of times in order to reduce the probability of a~false negative to at most $\frac{1}{2K}$, where $K$ is the number of branches executed. Using the union bound we can bound the probability of a~false negative of the whole algorithm by $\frac{1}{2}$.

Let us fix a~branch. Let $\eiterV$, $\eiterE$ be the vectors of expected cardinalities of sets $\vX$, $\vY$ respectively, and~$\eitercV$, $\eitercE$ be the vectors of expected numbers of connected components of $\vX,\vY$ respectively. Observe that the algorithm instead of deciding whether there exist sets $\eX_i$, $\eY_j$ satisfying $|\compnomark(\eX_i)|=\eitercv_i$, $|\compnomark(\eY_j)|=\eiterce_j$ for all $1\leq i\leq p_1$, $1\leq j\leq p_1$, can decide whether there exist sets $\eX_i$, $\eY_j$ satisfying $|\compnomark(\eX_i)|\leq \eitercv_i$ and~$|\compnomark(\eY_j)|\leq \eiterce_j$. Indeed, by the monotonicity of the formula $\phi$ if the algorithm finds out that the answer to this (easier to satisfy) question is positive, then the answer to the whole task is positive as well. Therefore, we can relax the constraint imposed on the numbers of connected components of quantified sets to inequalities. From now on we focus only on the case, when all the expected cardinalities of quantified sets are fixed and~the expected number of connected components of every quantified set is bounded by some fixed number.

We proceed to the description of the dynamic programming routine. By {\emph{cut}} of a~graph $G=(V,E)$ we mean a~pair $(V_1,V_2)$ such that $V_1\cup V_2=V$ and~$V_1\cap V_2=\emptyset$. Let us recall the notion of {\emph{consistently cut subgraph}}, the main ingredient of the Cut\&Count technique.

\begin{definition}[Definition 3.2 of \cite{my}]
A cut $(V_1,V_2)$ of an~undirected graph $G=(V,E)$ is \emph{consistent} if 
$u \in V_1$ and~$v \in V_2$ implies $uv \notin E$. 
A {\em consistently cut subgraph} of $G$ is a~pair $(X,(X_1,X_2))$ 
such that $(X_1,X_2)$ is a~consistent cut of $G[X]$.

Similarly, for a~directed graph $D=(V,A)$ a~cut $(V_1,V_2)$ is consistent
if $(V_1,V_2)$ is a~consistent cut in the underlying undirected graph.
A consistently cut subgraph of $D$ is a~pair $(X,(X_1,X_2))$ 
such that $(X_1,X_2)$ is a~consistent cut of the underlying undirected graph of $D[X]$.
\end{definition}
Observe that in this definition $X$ can be a~subset of vertices as well as a~subset of edges. In both cases $(X_1,X_2)$ is a~cut of $V(G(X))$, which can be a~proper subset of $V$.
\vskip 0.3cm
\noindent{\bf{The Cut part.}} We will use the concept of markers, used in more involved applications of the Cut\&Count technique. Let us define the family of {\emph{candidate solution}} $\cand$ as the set of quadruples $(\vX,\vY,\Xmarkers,\Ymarkers)$, where lengths of $\Xmarkers,\Ymarkers$ are $p_1,q_1$ respectively, such that
\begin{itemize}
\item $\psi$ is satisfied in every vertex of $G$ supplied with sets $\vFX, \vFY, \vX,\vY$;
\item sets $\vX,\vY$ satisfy the imposed conditions on their cardinalities (but not necessarily on the numbers of connected components);
\item $\xmarkers_i\subseteq \eX_i, |\xmarkers_i|\leq\eitercv_i$ for every $1\leq i\leq p_1$; 
\item $\ymarkers_j\subseteq \eY_j, |\ymarkers_j|\leq\eiterce_j$ for every $1\leq j\leq q_1$.
\end{itemize}
Sets $\xmarkers_i,\ymarkers_j$ are called {\em{marker sets}}.

Suppose that we are given a~weight function $\omega: (V\times \{0,1,2\}^{p_1}) \cup (E\times \{0,1,2\}^{q_1}) \to \{1,2,\ldots,N\}$ for $N=2|(V\times \{0,1,2\}^{p_1}) \cup (E\times \{0,1,2\}^{q_1})|$. This weight function will be fixed throughout the whole proof. Given a~candidate solution $(\vX,\vY,\Xmarkers,\Ymarkers)$ we can define weight of a~vertex as $\omega(v,\overline{\chi})$, where: $\chi_i=0$ if $v\notin \eX_i$; $\chi_i=1$ if $v\in \eX_i$ but $v\notin \xmarkers_i$; and~$\chi_i=2$ if $v\in \xmarkers_i$. Similarly we define the weight of an~edge (arc). Let us define the weight of a~candidate solution as the sum of weights over all the vertices and~edges (arcs) of the graph. Let $\cand_\targetW$ be the set of candidate solutions with weight exactly $\targetW$. Observe that the maximal value of $\targetW$ is $O(N(|V|+|E|))$, which is polynomial in terms of the input size.

Now we define the family of {\emph{solutions}} $\sols\subseteq \cand$ by requiring from a~candidate solution that
\begin{itemize}
\item each connected component of $G[\eX_i]$ contains at least one vertex from $\xmarkers_i$;
\item each connected component of $G[\eY_j]$ contains at least one edge (arc) from $\ymarkers_j$.
\end{itemize}
As $|\xmarkers_i|\leq\eitercv_i$ and~$|\ymarkers_j|\leq\eiterce_j$, these conditions imply the constraints on the numbers of connected components of sets $\vX,\vY$. Of course, as every pair $\vX,\vY$ satisfying all the imposed conditions can be marked appropriately, the set $\sols$ is nonempty iff the answer to the problem we are solving is true. 

Our goal is to count $|\sols_\targetW|$ modulo $2$ for all possible weights $\targetW$. In order to do this, we define the family of {\emph{objects}} $\objs_\targetW$ as the family of tuples 
$$((\vX,\vY,\Xmarkers,\Ymarkers),C_1,C_2,\ldots,C_{p_1},D_1,D_2,\ldots,D_{q_1}),$$
where
\begin{itemize}
\item $(\vX,\vY,\Xmarkers,\Ymarkers)\in\cand_\targetW$;
\item for $1\leq i\leq p_1$, $C_i=(V_{\eX_i,1},V_{\eX_i,2})$ is such that $(\eX_i,C_i)$ is a~consistently cut subgraph of $G$ and~$\xmarkers_i\subseteq V_{\eX_i,1}$;
\item for $1\leq j\leq q_1$, $D_j=(V_{\eY_j,1},V_{\eY_j,2})$ is such that $(\eY_j,D_j)$ is a~consistently cut subgraph of $G$ and~every edge from $\ymarkers_j$ has both endpoints in $V_{\eY_j,1}$.
\end{itemize}
Intuitively, an~object is a~candidate solution together with a~tuple of cuts consistent with the quantified sets, such that all the markers are on one side of the cut. Observe that in particular $V_{\eX_i,1} \cup V_{\eX_i,2}= \eX_i$ for all $1\leq i\leq p_1$ and~$V_{\eY_j,1} \cup V_{\eY_j,2}=V(\eY_j)$ for all $1\leq j\leq q_1$.
\vskip 0.3cm
\noindent{\bf The Count part.} We begin with the observation that the algorithm can count $|\objs_\targetW| \pmod{2}$ instead of $|\sols_\targetW| \pmod{2}$.
\begin{lemma}\label{lem:evencancel}
$|\objs_\targetW| \equiv |\sols_\targetW| \pmod{2}$ for all weights $\targetW$.
\end{lemma}
\begin{proof}
Let us consider a~candidate solution $Q=(\vX,\vY,\Xmarkers,\Ymarkers)$. For $1\leq i\leq p_1$ let us denote by $\compnomark(\eX_i,\xmarkers_i)$ the number of connected components of $G[\eX_i]$ not containing a~vertex from $\xmarkers_i$ (called further {\em{unmarked}}). Similarly, $\compnomark(\eY_j,\ymarkers_j)$ is the number of connected components of $G[\eY_j]$ not containing an~edge (arc) from $\ymarkers_j$ for $1\leq i\leq q_1$. Observe that there are exactly $\prod_{i=1}^{p_1}2^{\compnomark(\eX_i,\xmarkers_i)} \cdot \prod_{j=1}^{q_1}2^{\compnomark(\eY_j,\ymarkers_j)}$ objects associated with $Q$: for every set $\eX_i$ we have $2^{\compnomark(\eX_i,\xmarkers_i)}$ choices of including unmarked connected components of $G[\eX_i]$ to the sides of the cut $C_i$, and~the analogous holds for sets $\eY_j$. Therefore, 
$$|\objs_\targetW|=\sum_{(\vX,\vY,\Xmarkers,\Ymarkers)\in \cand_\targetW} \prod_{i=1}^{p_1}2^{\compnomark(\eX_i,\xmarkers_i)} \cdot \prod_{j=1}^{q_1}2^{\compnomark(\eY_j,\ymarkers_j)}.$$
Take both sides modulo $2$. Observe that the product under the sum is odd exactly for those candidate solutions, where there are no unmarked connected components of the quantified sets. Therefore, when considered modulo $2$, the summands are ones for these candidate solutions that are in fact solutions, and~zeroes otherwise. The claim follows.
\end{proof}

We now present a~dynamic programming routine that computes $|\objs_\targetW|$ modulo $2$ for all possible weights $\targetW$. We begin with adjusting the information stored in a~vertex to our needs. We will follow the notation from the proof of Theorem \ref{thm:deterministic}: $\psi_i=\dia^{S_i} \psi'_i$ are the subformulas of $\psi$ that begin with quantification, for $1\leq i\leq l$. Furthermore, $S_i=\alpha_{S_i}^{-1}(F_i)$ for homomorphisms $\alpha_{S_i}: \N \to M_i$ mapping $\N$ into finite monoids $M_i$, and~sets $F_i\subseteq M_i$. Let 
\begin{itemize}
\item $\Hh=\prod_{i=1}^l M_i$ be the history monoid; 
\item $\Pp=\{0,1\}^l$ be the set of possible predictions;
\item $\Xx=\{\zero, \oneone, \onetwo\}^{p_1}$ be the set of vectors indicating belonging to the sets $\eX_i$, including the side of the cut: $\zero$ means not belonging, $\oneone$ means belonging to $V_{\eX_i,1}$, $\onetwo$ means belonging to $V_{\eX_i,2}$;
\item $\Yy=\{\zero, \oneone, \onetwo\}^{q_1}$ be the set of vectors indicating existence of neighbouring edges from the sets $\eY_j$. $\zero$ on $j$-th coordinate means that there is no incident edge from $\eY_j$ introduced so far, $\oneone$ means that some incident edges have been introduced and~the vertex has been chosen to be in $V_{\eX_j,1}$, $\onetwo$ means the analogous but the vertex has been chosen to be in $V_{\eY_j,2}$.
\end{itemize}
Observe that in spite of syntactical similarities between sets $\Xx$ and~$\Yy$, their role is quite opposite. While information from $\Xx$ is being guessed in the introductory step of a~vertex, and~then is constant during considering possible extensions of a~partial solution, $\Yy$ acts more like history, remembering the types of so far introduced adjacent edges.

Let us define the set of possible information stored about a~vertex as $\Ii=\Hh\times \Pp\times \Xx\times \Yy$. We will refer to corresponding parts as to {\em{history}}, {\em{prediction}}, $\Xx${\em{-alignment}} and~$\Yy${\em{-alignment}}.

Let us fix a~bag $\bag{x}$. As~in the proof of Theorem \ref{thm:deterministic}, we define the set of partial objects $\objs_{x}(\targetW, \iterV, \iterE, \itercV, \itercE,s)$ for information evaluation $s\in \Ii^{\bag{x}}$, as the family of tuples $((\vX,\vY,\Xmarkers,\Ymarkers),C_1,C_2,\ldots,C_{p_1},D_1,D_2,\ldots,D_{q_1})$ such that following conditions are satisfied.
\begin{itemize}
\item The sum of weights of edges from $\subedges{x}$ and~vertices from $\subbags{x}\setminus\bag{x}$ is exactly $\targetW$.
\item For all $1\leq i\leq p_1$ the following holds:
\begin{itemize}
\item $(\eX_i, C_i)$ is a~consistently cut subgraph of $\subgraph{x}$;
\item $\xmarkers_i\subseteq V_{\eX_i,1}\setminus \bag{x}\subseteq  \eX_i\subseteq \subbags{x}$;
\item $|\eX_i|=\iterv_i$, $|\xmarkers_i|=\itercv_i$.
\end{itemize}
\item For all $1\leq j\leq q_1$ the following holds:
\begin{itemize}
\item $(\eY_j, D_j)$ is a~consistently cut subgraph of $\subgraph{x}$;
\item $\ymarkers_j\subseteq \eY_j\subseteq \subedges{x}$ and~every edge (arc) in $\ymarkers_j$ has both endpoints in $V_{\eY_j,1}$;
\item $|\eY_j|=\itere_j$, $|\ymarkers_j|=\iterce_j$.
\end{itemize}
\item The $\Xx,\Yy$-alignments from $s(v)$ for $v\in \bag{x}$ are consistent with sets $\vX,\vY$ and~cuts $C_1,C_2,\ldots,C_{p_1},D_1,D_2,\ldots,D_{q_1}$.
\item In every vertex of $v\in \subbags{x}\setminus \bag{x}$ the formula $\psi$ is true, when evaluated in $\subgraph{x}$ supplied with sets $\vFX, \vFY, \vX,\vY$. However, when trying to evaluate the boolean value of some formula $\psi_i$ in a~vertex from $\bag{x}$, we access the value in the prediction instead of actually evaluating the formula.
\item For every $v\in \bag{x}$, the number of neighbours of $v$ in $\subgraph{x}$ satisfying formula $\psi'_i$ ($1\leq i\leq l$), when accessed directly from $v$, maps in $\alpha_{S_i}$ to the $i$-th coordinate of the history from $s(v)$. Again, when evaluating formulas $\psi_i$ in vertices from $\bag{x}$, we access the value from the prediction instead of actually determining the outcome in $\subgraph{x}$.
\end{itemize}
Note that according to this definition, the vertex marker sets have to be disjoint with the bag and~summation of weights is carried out over vertices that are not in the bag. The algorithm will guess the alignment of a~vertex to marker sets and~update the weight during its forget step. If we chose otherwise, namely to perform updates during introduction, the problem with double counting would arise during the join.

Let us denote $A_x(\targetW,\iterV, \iterE, \itercV, \itercE,s)=|\objs_{x}(\targetW, \iterV, \iterE, \itercV, \itercE,s)| \pmod{2}$. From now on, all the computations over the values $A_x$ will be carried out in $\Z_2$. Observe that we need to compute $\sum_{\itercV\text{: }\itercv_i\leq \eitercv_i}\sum_{\itercE\text{: }\iterce_i\leq \eiterce_i} A_r(\targetW,\eiterV,\eiterE,\itercV,\itercE,\emptyset)$ for all possible $\targetW$. Thus, it suffices to show a~dynamic programming routine that will compute all the values of $A_x$ for possible arguments in a~bottom-up fashion. We now present the steps that have to be carried out during computation for every type of a~bag. We follow convention that all values $A_x$ with improper arguments, for example having negative coordinates, are defined to be zeroes. For a~condition $c$, by $[c]$ we denote $1$ if $c$ is true, and~$0$ otherwise. Moreover, for a~function $s$ by $s[v \to \alpha]$ we denote the function $s \setminus \{(v,s(v))\} \cup \{(v,\alpha)\}$. Note that this definition is correct even when $s$ is not defined on $v$. Also, we treat vectors over $\{\zero,\oneone,\onetwo\}$ also as vectors over $\{0,1\}$ by mapping $\zero\to 0$ and~$\oneone, \onetwo \to 1$. 
\vskip 0.2cm
\noindent\textbf{Leaf bag $x$}:
	$$ A_x(0,\overline{0},\overline{0},\overline{0},\overline{0},\emptyset) = 1 $$
	$\overline{0}$ denotes vector of zeroes of appropriate length. All other values of $A_x(\targetW,\iterV,\iterE,\itercV,\itercE,\emptyset)$ are zeroes.
\vskip 0.2cm
\noindent\textbf{Introduce vertex $v$ bag $x$ with child $y$}:
	$$ A_x(\targetW,\iterV,\iterE,\itercV,\itercE,s[v \to (h,\pi,b,e)]) = [h=e_\Hh][e=\overline{\zero}]A_y(\targetW,\iterV-b,\iterE,\itercV,\itercE,s) $$
	This step is almost the same as in the algorithm from the Theorem \ref{thm:deterministic}. The new vertex has no neighbours so far, therefore its history must be void, which is indicated by checkng whether the assigned history is equal to $e_\Hh$, the identity of $\Hh$. For the same reason, its side of the cut for any set $\eY_j$ is not decided yet, hence the second check. However, prediction and~the $\Xx$-alignment can be arbitrary. Note that the new vertex does not contribute to the weight of the partial object and~does not belong to any marker sets.
\vskip 0.2cm
\noindent\textbf{Introduce edge (arc) $uv$ bag $x$ with child $y$}:

	\noindent Let $b(u),e(u),b(v),e(v)$ be the $\Xx$- and~$\Yy$-alignments in $s(u),s(v)$ respectively.
	\begin{eqnarray*}
	 A_x(\targetW,\iterV,\iterE,\itercV,\itercE,s) & = & [\forall_i (b(u)_i=\zero \vee b(v)_i=\zero \vee b(u)_i=b(v)_i)] \\
& & \sum_{d\in\{0,1\}^{q_1}} \ \ [\forall_j ((d_j=1) \Rightarrow (e(u)_j=e(v)_j\neq \zero))] \\
& & \sum_{m\in\{0,1\}^{q_1}} \ \ [\forall_j((d_j=0) \Rightarrow (m_j=0))\wedge \\
& & \qquad \qquad((m_j=1) \Rightarrow (e(u)_j=e(v)_j=\oneone))]\\ 
& & \sum_{s'\in S'} \ \ A_y(\targetW-\omega(uv,d+m),\iterV,\iterE-d,\itercV,\itercE-m,s')
	\end{eqnarray*}
	Before we start any summations, we need to ensure that the new edge is consistent with the cuts $C_i$, otherwise the whole outcome is zero. The first two summations correspond to all possible ways of choosing the alignment of the newly introduced edge to sets $\eY_j$ and~marker sets $\ymarkers_j$. Again, having fixed these alignments we have to ensure that they are consistent with the cuts $D_j$. As~the edge already contributes both to the cardinalities of marker sets and~the weight of the partial object, we need to access the precomputed values with updated weight and~cardinalities of sets $\vY$, $\Ymarkers$. In the third summation we sum over all information evaluations $s'$ such that $s'$ differs from $s$ only on histories and~$\Yy$-alignments of vertices $u,v$. As~in the proof of Theorem \ref{thm:deterministic}, histories in $s(u),s(v)$ have to be histories in $s'(u),s'(v)$ but updated with respect to the introduced edge by possibly adding an~image of one on a~coordinate, whenever a~formula $\psi_i'$ is true in the neighbour when accessed directly from the considered vertex. This can be resolved in constant time knowing vector $d$ and~the predictions and~$\Xx$--alignments in $u,v$. In addition, we need to ensure that the $\Yy$-alignment is properly updated: in both vertices $u,v$, for every index $i$, the $i$-th coordinate of the $\Xx$-alignment has to be at least the same in $s$ as in $s'$ (it may change from $\zero$ to $\oneone$ or~$\onetwo$, or~stay the same). Similarly as in the proof of Theorem \ref{thm:deterministic}, the number of contributing information evaluations $s'$ is constant and~the algorithm can enumerate them in constant time. Thus, the computation of a~single value can be performed in constant time.
\vskip 0.2cm
\noindent\textbf{Forget vertex $v$ bag $x$ with child $y$}:

	\noindent Let $b(v)$ denote the $\Xx$-alignment in $s(v)$.
	\begin{eqnarray*}
	 A_x(\targetW,\iterV,\iterE,\itercV,\itercE,s) & = & \sum_{m\in \{0,1\}^{p_1}} \ \ [\forall_i (m_i=1)\Rightarrow (b(v)_i=\oneone)] \sum_{(h,\pi,b,e)\in \Gg} \\ & & A_y(\targetW-\omega(v,b(v)+m),\iterV,\iterE,\itercV-m,\itercE,s[v\to (h,\pi,b,e)])
	\end{eqnarray*}
	The first summation corresponds to possible choices of vector $m$ indicating belonging of $v$ to the marker sets $\Xmarkers$. If $v$ is to be contained in $\xmarkers_i$, then it has to be contained in $V_{\eX_i,1}$, so the $i$-th coordinate of vector $b$ has to be $\oneone$. For a~particular vector $m$ already satisfying this condition, vector $b(v)+m$ (over $\{0,1,2\}$) exactly indicates the belonging of $v$ to $\vX$ and~$\Xmarkers$ in the sense of the definition of weight function $\omega$. Thus $\omega(v,b(v)+m)$ is the precise weight of vertex $v$ in this partial object and~can be used to access precomputed value with updated weight. The second summation is the same as in the corresponding step of the algorithm from Theorem \ref{thm:deterministic}. We sum over all possible information that could be stored in the vertex we forget, i.e., having the history consistent with the prediction and~making the formula $\psi$ satisfied. Similarly to the proof of Theorem \ref{thm:deterministic}, the computation of a~single value can be carried out in constant time.
\vskip 0.2cm
\noindent\textbf{Join bag $x$ with children $y,z$}:
	\begin{eqnarray*} 
	A_x(\targetW,\iterV,\iterE,\itercV,\itercE,s) & = & \sum_{\targetW'+\targetW''=\targetW} \ \ \sum_{\iterV'+\iterV''=\iterV+\xi(\bag{x})} \ \ \sum_{\iterE'+\iterE''=\iterE}\ \ \sum_{\itercV'+\itercV''=\itercV} \ \ \sum_{\itercE'+\itercE''=\itercE}\ \ \sum_{s'+s''=s} \\ & & A_y(\targetW',\iterV',\iterE',\itercV',\itercE',s')A_z(\targetW'',\iterV'',\iterE'',\itercV'',\itercE'',s'')
	\end{eqnarray*}
	The step is a~generalization of the corresponding from the proof of Theorem \ref{thm:deterministic}. Here, $s'+s''=s$ denotes that for all the vertices $v\in \bag{x}$: \begin{itemize}
\item the predictions in $s(v),s'(v),s''(v)$ are the same;
\item the $\Xx$-alignments in $s(v),s'(v),s''(v)$ are the same;
\item the histories in $s'(v)$ and~$s''(v)$ sum up to the history in $s(v)$ (in the history monoid);
\item the $\Yy$-alignments in $s'(v)$ and~$s''(v)$ sum up to $\Yy$-alignment in $s(v)$. By this, we mean that $\zero+a=a$ for all $a\in \{\zero, \oneone,\onetwo\}$, $\oneone+\oneone=\oneone$, $\onetwo+\onetwo=\onetwo$, however addition $\oneone+\onetwo$ cannot be carried out and~such a~pair is forbidden to occur on any coordinate of added vectors.
\end{itemize}
The first summation corresponds to splitting the weight among two partial solutions, the next two correspond to splitting the cardinalities of sets $\vX$, $\vY$, the next two correspond to splitting the numbers of so far used markers and~the last summation corresponds to splitting the information evaluations. As~the marker sets are disjoint with the bags, weights of the partial solution are not summed over the bag and~$\subedges{x}$ is a~disjoint sum of $\subedges{y}$ and~$\subedges{z}$, the problem with double counting can possibly occur only in the second sum. It can be however solved by adding to the right side of the equation the vector $\xi(\bag{x})$ --- the sum over the bag $\bag{x}$ of $\Xx$-alignments in $s$. Similarly as in the proof of Theorem \ref{thm:deterministic}, for every pair of information evaluations $(s',s'')$ the algorithm can determine the (at most one) information evaluation $s$ it contributes to. Then, for every information evaluation $s$ we consider only contributing pairs, thus considering every pair only once. Therefore, the computation of the whole step can be performed in time $|\Ii|^{2t}|G|^{O(1)}$.

The computation of a~single value in leaf, introduce, introduce edge and~forget steps takes constant time, while the whole join step can be performed in $|\Ii|^{2t}|G|^{O(1)}$ time. As~there are $|\Ii|^t|G|^{O(1)}$ values to be computed at each step and~the number of steps is polynomial, the whole dynamic programming routine runs in $|\Ii|^{2t}|G|^{O(1)}$ time. 

The whole algorithm works as follows. Firstly, choose randomly the weight function, each value independently with uniform distribution. Then, for every possible weight $\targetW$ compute $|\sols_\targetW|$ modulo $2$ using described dynamic programming routine and~Lemma \ref{lem:evencancel}. If at least one of the computed values is $1$, answer YES, otherwise answer NO.

In order to prove soundness of the described algorithm, observe that if at least one $|\sols_\targetW|$ is odd then it is a~sufficient proof of existence of at least one solution. Therefore, the algorithm can safely answer YES without risking a~false positive. On the other hand, the Isolation Lemma assures that in case of existence of solutions, i.e. the set $\sols$ being nonempty, with probability at least $\frac{1}{2}$ there exists a~unique solution with minimal weight $\targetW_0$. As~$1$ is odd, $|\sols_{\targetW_0}|$ is odd as well and~the algorithm will answer YES.
\end{proof}

\section{\laglc formulas for problems considered in \cite{my}}\label{sec:formulae}

We present logical formulas of \laglc for problems proven to be tractable in single exponential time when parameterized by treewidth by Cygan et al., when introducing the Cut\&Count technique~\cite{my}. All of them are of quantification rank at most $1$, which explains why Cygan et al. did not need to use the prediction technique in their proofs. The exact problem definitions can be found in~\cite{my}.

The formulas do not use fixed sets, unless it is explicitely stated. The vectors of parameters always consist of one parameter $k$. 

\vskip 0.5cm 

\hrule \vskip 0.2cm
\steinertree

($T$, the terminals, is a~fixed set of vertices)
\begin{align*}
\exists_{X\subseteq V} (|\compnomark(X)|\leq 1 \wedge |X|\leq k+|T|) \wedge \forall_v G,T,X,v\models (T\Rightarrow X)
\end{align*}
\hrule \vskip 0.2cm

\fvs
\begin{align*}
\exists_{X\subseteq V}\exists_{Z\subseteq V}\exists_{Y\subseteq E} (|\compnomark(Y)|+|Y|+|Z|+|X|\leq|V| \wedge |X|\leq k) \wedge \\ \forall_v G,X,Z,Y,v\models \left[(Z\Leftrightarrow (\neg X\wedge \rect X))\wedge (X\Rightarrow \rect \neg Y) \wedge (\neg X \Rightarrow \rect (\neg X \Rightarrow Y))\right]
\end{align*}
\hrule \vskip 0.2cm

\cvertexcover
\begin{align*}
\exists_{X\subseteq V} (|\compnomark(X)|\leq 1 \wedge |X|\leq k) \wedge \forall_v G,X,v\models (\neg X \Rightarrow \rect X)
\end{align*}
\hrule \vskip 0.2cm

\cdomset
\begin{align*}
\exists_{X\subseteq V} (|\compnomark(X)|\leq 1 \wedge |X|\leq k) \wedge \forall_v G,X,v\models (\neg X \Rightarrow \dia X)
\end{align*}
\hrule \vskip 0.2cm

\cfvs
\begin{align*}
\exists_{X\subseteq V} \exists_{Z\subseteq V}\exists_{Y\subseteq E} (|\compnomark(Y)|+|Y|+|Z|+|X|\leq|V|\wedge |\compnomark(X)|\leq 1 \wedge |X|\leq k ) \wedge \\ \forall_v G,X,Z,Y,v\models \left[(Z\Leftrightarrow (\neg X\wedge \rect X))\wedge (X\Rightarrow \rect \neg Y) \wedge (\neg X \Rightarrow \rect (\neg X \Rightarrow Y))\right]
\end{align*}
\hrule \vskip 0.2cm

\coct
\begin{align*}
\exists_{X\subseteq V} \exists_{L\subseteq V} \exists_{R\subseteq V} (|\compnomark(X)|\leq 1 \wedge |X|\leq k) \wedge \forall_v G,X,L,R,v\models \\
(L\vee R\vee X) \wedge \neg (L\wedge R) \wedge \neg (R\wedge X) \wedge \neg (X\wedge L) \wedge\\
(L\Rightarrow \rect (R\vee X)) \wedge (R\Rightarrow \rect (L\vee X))
\end{align*}
\hrule \vskip 0.2cm

Undirected \mincyclecovername
\begin{align*}
\exists_{Y\subseteq E} (|\compnomark(Y)|\leq k) \wedge \forall_v G,Y,v\models \dia^{\{2\}} Y
\end{align*}
\hrule \vskip 0.2cm

Directed \mincyclecovername
\begin{align*}
\exists_{Y\subseteq E} (|\compnomark(Y)|\leq k) \wedge \forall_v G,Y,v\models \left[(\dia^{\{1\}} (Y\wedge \uparrow)) \wedge (\dia^{\{1\}} (Y\wedge \downarrow))\right]
\end{align*}
\hrule \vskip 0.2cm

Undirected \longestpath
\begin{align*}
\exists_{A\subseteq V}\exists_{Y\subseteq E} (|\compnomark(Y)|\leq 1 \wedge |A|=2 \wedge |Y|\geq k) \wedge \forall_v G,A,Y,v\models \\
\left[(A \Rightarrow \dia^{\{1\}} Y)\wedge (\neg A~\Rightarrow \dia^{\{0,2\}} Y)\right]
\end{align*}
\hrule \vskip 0.2cm

Directed \longestpath
\begin{align*}
\exists_{A\subseteq V}\exists_{B\subseteq V}\exists_{Y\subseteq E} (|\compnomark(Y)|\leq 1 \wedge |A|=1 \wedge |B|=1 \wedge |Y|\geq k) \wedge \forall_v G,A,B,Y,v\models \\ 
\left(A \Rightarrow \left[\neg B \wedge \dia^{\{1\}} Y \wedge \dia^{\{1\}} (Y\wedge \downarrow))\right]\right)\wedge \\
\left(B \Rightarrow \left[\neg A~\wedge \dia^{\{1\}} Y \wedge \dia^{\{1\}} (Y\wedge \uparrow))\right]\right)\wedge \\
\left((\neg A\wedge \neg B) \Rightarrow \left[(\neg \dia Y) \vee ((\dia^{\{1\}} (Y\wedge \downarrow)) \wedge (\dia^{\{1\}}(Y\wedge \uparrow)))\right]\right)
\end{align*}
\hrule \vskip 0.2cm

Undirected \longestcycle
\begin{align*}
\exists_{Y\subseteq E} (|\compnomark(Y)|\leq 1 \wedge |Y|\geq k) \wedge \forall_v G,Y,v\models \dia^{\{0,2\}} Y
\end{align*}
\hrule \vskip 0.2cm

Directed \longestcycle
\begin{align*}
\exists_{Y\subseteq E} (|\compnomark(Y)|\leq 1 \wedge |Y|\geq k) \wedge \forall_v G,Y,v\models \left[(\neg \dia Y) \vee ((\dia^{\{1\}} (Y\wedge \downarrow)) \wedge (\dia^{\{1\}}(Y\wedge \uparrow)))\right]
\end{align*}
\hrule \vskip 0.2cm

\exactleaf
\begin{align*}
\exists_{L\subseteq V}\exists_{T\subseteq E} (|\compnomark(T)|\leq 1 \wedge |L|=k \wedge |T|=|V|-1) \wedge \forall_v G,L,T,v\models (\dia T) \wedge (L\Leftrightarrow \dia^{\{1\}} T)
\end{align*}
\hrule \vskip 0.2cm

\exactoutbranching 

($R$, the singleton of the root, is a~fixed set)
\begin{align*}
\exists_{L\subseteq V}\exists_{T\subseteq E} (|\compnomark(T)|\leq 1 \wedge |L|=k \wedge |T|=|V|-1 \wedge |R|=1) \wedge \forall_v G,R,L,T,v\models \\
\left[(\dia T) \wedge (R\Rightarrow \neg \dia (T\wedge \uparrow)) \wedge (\neg R\Rightarrow \dia^{\{1\}} (T\wedge \uparrow)) \wedge (L\Leftrightarrow \neg \dia (T\wedge \downarrow))\right]
\end{align*}
\hrule \vskip 0.2cm

\maxspantree
\begin{align*}
\exists_{F\subseteq V}\exists_{T\subseteq E} (|\compnomark(T)|\leq 1 \wedge |F|\geq k \wedge |T|=|V|-1) \wedge \forall_v G,F,T,v\models (\dia T) \wedge (F \Leftrightarrow \rect T)
\end{align*}
\hrule \vskip 0.2cm

\gmtsp 

($2\mathbb{N}$ denotes the set of even nonnegative integers)
\begin{align*}
\exists_{Y\subseteq E}\exists_{Y_1\subseteq E}\exists_{Y_2\subseteq E} (|\compnomark(Y)|\leq 1 \wedge |Y_1|+2|Y_2|\leq k) \wedge \forall_v G,Y,Y_1,Y_2,v\models \\
\left[(\rect (Y\Leftrightarrow (Y_1\vee Y_2))) \wedge (\rect (\neg Y_1\vee \neg Y_2)) \wedge (\dia Y) \wedge (\dia^{2\mathbb{N}} Y_1)\right]
\end{align*}

\hrule

\section{Correctness of the reduction from the proof of Theorem~\ref{thm:negative}}\label{sec:negative-details}

\subsubsection*{Soundness.} Let $G$ be the graph obtained in the construction. Denote $P=A\cup B$. We prove the soundness of the construction in two steps, as was described in Section~\ref{sec:negative}. We also follow convention introduced there. 

\begin{lemma}\label{lem:soundness1}
If $\ff$ is satisfiable, then $(G,k)$ is a~YES instance of \ltlvertexdeletion.
\end{lemma}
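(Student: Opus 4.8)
The plan is to build, from a satisfying assignment of $\ff$, an explicit set $D$ of $k=n+2m$ vertices of $G$ whose removal kills every cycle of length at most $l$. I would set $D=D_{\mathrm{var}}\cup D_{\mathrm{cla}}$, where $D_{\mathrm{var}}$ contains, for each variable $x$, the vertex $t_x$ or $t_{\neg x}$ according to which of the literals $x,\neg x$ is made true by the assignment (one vertex per variable gadget), and $D_{\mathrm{cla}}=\bigcup_S D_S$, where for each clause $S=r_1\vee r_2\vee r_3$ the set $D_S$ is a two-element subset of $\{s_{S,r_1},s_{S,r_2},s_{S,r_3}\}$ that contains $s_{S,r_i}$ for every literal $r_i$ of $S$ made false; this is possible since a satisfied clause has at most two false literals, and if it has fewer I complete $D_S$ arbitrarily, keeping out of $D_S$ some $s$-vertex of a true literal. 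Because distinct variables give distinct true literals, $\psi$ is injective, and the $t$- and $s$-vertices are interior vertices of pairwise disjoint paths, the chosen vertices are pairwise distinct, so $|D|=n+2m=k$ and it suffices to prove that $G-D$ has no cycle of length $\le l$.

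The analysis rests on the structure of $G$: its only vertices of degree at least three are the vertices of $P=A\cup B$, the $t$-vertices and the $s$-vertices, so any cycle $C$ is a concatenation of \emph{segments} --- maximal subpaths with degree-two interior --- each joining two of these branch vertices. I would first list the segments with their lengths: the length-$1$ stub joining each $P$-vertex to the $t$- (or $s$-) vertex adjacent to it, the length-$(\alpha-1)$ stub joining it to the $t$-vertex at the far end of a variable-gadget path, the analogous length-$(\beta-1)$ stub to the $s$-vertex at the far end of a clause-gadget path, and the length-$\alpha$ and length-$\beta$ connectors joining $t_x$ to $t_{\neg x}$ inside a variable gadget and joining pairs of $s$-vertices inside a clause gadget. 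Since $D$ consists only of $t$- and $s$-vertices, $C$ may use any $P$-vertices but must avoid the $t$- and $s$-vertices that lie in $D$.

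Next I would split a hypothetical cycle $C$ of $G-D$ with $|C|\le l$ into two cases. If all vertices of $C$ lie in a single gadget: when it is a variable gadget $Q_x$ then $C$ must visit both $t_x$ and $t_{\neg x}$ (as $Q_x-t_x$ and $Q_x-t_{\neg x}$ are forests), but $D$ contains exactly one of them; when it is a clause gadget $C_S$ then $C$ must visit at least two of $s_{S,r_1},s_{S,r_2},s_{S,r_3}$ (deleting any two of them leaves a forest), but $D$ removes two of the three. In either situation $C$ meets $D$, a contradiction. Otherwise $C$ uses some vertex of $P$ to travel between gadgets, and here I would use the inequalities $\alpha+\beta=l$, $2\beta>l$, $2\alpha+4>l$ to show that the only cycles of length at most $l$ of this kind are the ``mixed-literal'' cycles: for a literal $r$ with $\psi(r)=(u,v)$ that occurs in a clause $S$, the cycle obtained by gluing the length-$\alpha$ path from $u$ to $v$ through $t_r$ (in the variable gadget of $r$) to the length-$\beta$ path from $v$ to $u$ through $s_{S,r}$ (in the gadget of $S$), which has length exactly $l$ and uses the branch vertices $u,v,t_r,s_{S,r}$. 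Such a $C$ would need $t_r\notin D$ and $s_{S,r}\notin D$; but if $r$ is true then $t_r\in D_{\mathrm{var}}$, and if $r$ is false then $s_{S,r}\in D_S$ --- a contradiction either way, which finishes the proof.

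I expect the only genuinely delicate part to be the claim in the last case, that every cycle through a $P$-vertex of length at most $l$ is a mixed-literal cycle. The argument must handle the $P$-vertices that coincide (only one of $u=u'$, $v=v'$ is possible in a variable gadget, by injectivity of $\psi$), and it relies on three length estimates read off from the displayed inequalities: any walk between two distinct vertices of $A$ (or of $B$) crosses a whole gadget path twice and so has length at least $2\alpha$; two clause paths for the same literal already close up into a cycle of length $2\beta>l$; and a variable-gadget connector glued to a clause-gadget connector together with four unit stubs already has length $2\alpha+4>l$. Combining these rules out every configuration other than the mixed-literal one; the remaining parts --- the two single-gadget cases and the count $|D|=k$ --- are routine.
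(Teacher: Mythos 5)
Your proposal is correct and follows essentially the same route as the paper's proof: the same deletion set (the $t$-vertex of the satisfied literal for each variable, two $s$-vertices per clause avoiding one satisfying literal), intra-gadget cycles killed because each gadget minus the chosen branch vertices is a forest, and the inequalities $2\beta>l$ and $2\alpha+4>l$ reducing the inter-gadget case to the length-$l$ cycle formed by the variable-gadget path and a clause-gadget path of a literal $r$, which is hit by $t_r$ or $s_{S,r}$ according to the truth value of $r$. One small slip in your sketch: the minimum length of an internally disjoint (from $A\cup B$) path between two distinct vertices of $A$ (or of $B$) is $\alpha+2$ rather than $2\alpha$, but this is precisely the bound the paper uses and it still forces any cycle through two same-side vertices to have length at least $2\alpha+4>l$, so your argument goes through unchanged.
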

\begin{proof}
We need to show that $G$ contains a~set $X$ of $n+2m$ vertices that hits all the cycles of length at most $l$. Let $\phi$ be an~assignment satisfying $\ff$. For every variable $x$, take into $X$ the vertex $t_x$ if $\phi(x)=TRUE$, and~$t_{\neg x}$ otherwise. For every clause $S=r_1\vee r_2\vee r_3$ let $r_i$ be any literal that satisfies $S$. Take into $X$ two vertices $r_j$, where $j\neq i$. Thus $|X|=n+2m$. We now verify that $G\setminus X$ contains no cycle of length at most $l$.

Let $C$ be any cycle in $G$. We need to prove that either $C$ contains a~vertex from $X$ or~is of length greater than $l$. Observe that the parts of gadgets not contained in $P$ are pairwise independent. Therefore, we can distinguish three cases: 
\begin{itemize}
\item $C$ is fully contained in one gadget;
\item $C$ is not contained in one gadget and~passes through exactly two vertices from $P$;
\item $C$ is not contained in one gadget and~passes through three or~more vertices from $P$.
\end{itemize}

Regarding the first case, observe that after deleting $X$ every gadget becomes a~forest, so in this case $C$ has to contain a~vertex from $X$. Note that this is true also when some of vertices $u,v,u',v'$ (in case of the vertex gadget) or~$u_i,v_i$ (in case of the clause gadget) coincide.

Regarding the latter cases, observe that two vertices $u,v$ from $P$ in $G$ are in distance $\alpha$ if $(u,v)=\psi(r)$ for some literal $r$, and~are in distance at least $\alpha+2$ otherwise. If $C$ passes through two vertices from $P$ that are in distance at least $\alpha+2$, then its length is at least $2\alpha+4>l$. This immediately resolves the third case: if $C$ passes through at least three vertices from $P$, then there is a~pair of them contained either both in $A$ or~both in $B$, thus not contained in the image of $\psi$. As~a~result, in the third case the length of $C$ is greater than $l$.

We are left with the second case. Moreover, we can focus only on the subcase, when the two vertices from $P$ that $C$ passes through are such $u,v$ that $(u,v)=\psi(r)$ for some literal $r$, equal to $x$ or~$\neg x$. Observe that paths connecting $u$ and~$v$ in $G\setminus X$ not passing through other vertices from $P$, can only be paths built while constructing gadgets $Q_x$ or~$C_{S,r}$ for clauses $S$ containing $r$, of length $\alpha$ and~$\beta$ respectively. As~$2\beta >l$, $C$ could possibly not pass through vertices from $X$ and~have length at most $l$, if it consisted of the path from $Q_x$ and~a~path from $C_{S,r}$ for some $S$. If $\phi(x)$ is such that $r$ is true, then $t_{r}\in X$ and~$C$ contains a~vertex from $X$. Otherwise, all the clauses containing $r$ had to be satisfied by some other literal, so $s_{S,r}\in X$ for every $S$ containing $r$. Thus, also in this situation $C$ contains a~vertex from $X$.
\end{proof}

\begin{lemma}\label{lem:soundness2}
If $(G,k)$ is a~YES instance of \lvertexdeletion, then $\ff$ is satisfiable.
\end{lemma}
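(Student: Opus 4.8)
The plan is to reverse-engineer a satisfying assignment from a deletion set $X$ with $|X| \le k = n+2m$. First I would observe that every gadget has a built-in budget requirement. Each variable gadget $Q_x$ contains the cycle of length $l$ formed by the two $t_x$--$t_{\neg x}$ paths (lengths $\alpha$ and $\beta$, summing to $l$); this cycle must be hit, so $X$ must contain at least one vertex of $Q_x$ that is not shared with $P$ — in fact I would argue it must contain $t_x$ or $t_{\neg x}$, since these are the only vertices lying on that length-$l$ cycle together with enough other short cycles to force the choice there rather than elsewhere (one has to be slightly careful because the $u,v,u',v'$ vertices may coincide across gadgets, but the inner paths of distinct gadgets are disjoint off $P$, so a deletion inside $Q_x$'s private part is "wasted" on other gadgets). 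Similarly, each clause gadget $C_S$ contains the triple of length-$l$ cycles joining $s_{S,r_1}, s_{S,r_2}, s_{S,r_3}$ pairwise; hitting all three with vertices in the private part of $C_S$ requires at least two of the three vertices $s_{S,r_1}, s_{S,r_2}, s_{S,r_3}$. Summing, any solution must spend at least $1$ vertex privately in each $Q_x$ and at least $2$ privately in each $C_S$, for a total of at least $n+2m = k$; hence equality holds everywhere, and moreover $X$ uses \emph{no} vertices of $P$ and no other vertices of the private parts. So $X$ consists of exactly one of $\{t_x, t_{\neg x}\}$ per variable and exactly two of $\{s_{S,r_1}, s_{S,r_2}, s_{S,r_3}\}$ per clause.

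Next I would define the assignment: set $\phi(x) = \mathrm{TRUE}$ iff $t_x \in X$. It remains to check that every clause $S = r_1 \vee r_2 \vee r_3$ is satisfied. Let $r_i$ be the unique literal with $s_{S,r_i} \notin X$ (exactly one, by the budget argument). I claim $r_i$ is true under $\phi$; this is where the key obstacle lies. Suppose not: then the variable gadget has deleted the $t$-vertex \emph{of the negation} of $r_i$, so $t_{r_i} \notin X$. Now consider the literal $r_i$, say $r_i \in \{x, \neg x\}$, with $\psi(r_i) = (u,v)$. There is a path of length $\alpha$ from $u$ to $v$ through $t_{r_i}$ inside $Q_x$ (avoiding $X$, since $t_{r_i} \notin X$ and $u,v \in P$ are untouched), and there is a path of length $\beta$ from $u$ to $v$ through $s_{S,r_i}$ inside $C_S$ (avoiding $X$, since $s_{S,r_i} \notin X$). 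Concatenating gives a cycle of length $\alpha + \beta = l$ in $G \setminus X$ — provided the two paths are internally disjoint, which holds because the private part of $Q_x$ and the private part of $C_S$ meet only in $P$, and both paths touch $P$ exactly at $\{u,v\}$. This contradicts $X$ being a solution. Hence $r_i$ is true and $\phi$ satisfies $\ff$.

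The main obstacle I anticipate is the book-keeping in the budget/tightness argument: I must rule out "clever" solutions that hit a gadget's forcing cycles using vertices of $P$, or using interior path vertices other than the $t$- or $s$-vertices, and I must handle the degenerate cases where the endpoints $u,v,u',v'$ (resp. $u_i,v_i$) coincide with endpoints of other gadgets. The clean way to do this is the counting argument above combined with the observation (already used in Lemma~\ref{lem:soundness1}) that distinct gadgets share only vertices of $P$, so a deletion in one gadget's private part cannot help another gadget; this forces the $\ge 1$ per $Q_x$ and $\ge 2$ per $C_S$ counts to be disjoint and to sum to exactly $k$, pinning down $X$ completely. Once the structure of $X$ is pinned down, the contradiction via the length-$l$ cycle is immediate from $\alpha + \beta = l$, exactly mirroring the argument in Lemma~\ref{lem:soundness1}.
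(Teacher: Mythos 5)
Your overall strategy --- a tightness/counting argument over the pairwise disjoint private parts of the gadgets, followed by reconstructing a length-$l$ cycle from the $\alpha$-path in $Q_x$ and the $\beta$-path in $C_S$ --- is the same as the paper's. However, there is a genuine gap in the step where you pin down the structure of $X$. The counting argument yields only that $X$ avoids $P$, contains exactly one vertex of the length-$l$ cycle spanned between $t_x$ and $t_{\neg x}$ for each variable, and exactly two vertices of the triple of length-$l$ cycles spanned between $s_{S,r_1},s_{S,r_2},s_{S,r_3}$ for each clause. It does \emph{not} yield that these vertices are among $\{t_x,t_{\neg x}\}$, respectively among $\{s_{S,r_1},s_{S,r_2},s_{S,r_3}\}$: nothing forbids a solution from hitting the $t_x$--$t_{\neg x}$ cycle in an interior (degree-$2$) vertex --- e.g., when every clause containing $x$ or $\neg x$ is hit at the corresponding $s$-vertex, the variable budget may be spent anywhere on that cycle --- nor from hitting the clause triple with one $s$-vertex plus an interior vertex of the remaining cycle. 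Your justification (``enough other short cycles to force the choice there'') is therefore not correct, and the subsequent ``unique literal with $s_{S,r_i}\notin X$'' claim can also fail. The paper closes exactly this hole with an explicit normalization step: since the interior vertices in question have degree $2$, any cycle through such a vertex also passes through $t_x$, $t_{\neg x}$ or some $s_{S,r_i}$, so one may assume without loss of generality that $X$ consists only of these distinguished vertices; your write-up is missing this exchange argument.

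The gap is repairable in either of two ways. You can insert the paper's substitution argument and then your final cycle argument goes through verbatim. Alternatively, you can keep the weaker (and correct) structural conclusion --- $X\cap P=\emptyset$, the one vertex in $Q_x$'s private part lies on the $t_x$--$t_{\neg x}$ cycle, and the two vertices in $C_S$'s private part lie on the triple of cycles --- and note that this already suffices: for any $i$ with $s_{S,r_i}\notin X$ (at least one exists, since only two vertices of $X$ lie in $C_S$'s private part), the length-$l$ cycle formed by the $\beta$-path $u_i$--$s_{S,r_i}$--$v_i$ and the $\alpha$-path $u_i$--$t_{r_i}$--$v_i$ contains no vertex of $X$ other than possibly $t_{r_i}$, forcing $t_{r_i}\in X$ and hence the clause to be satisfied under your assignment. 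As written, though, the structural claim your proof relies on is false, so the argument does not stand without one of these repairs.
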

\begin{proof}
Let $X$ be the set of at most $n+2m$ vertices such that $G\setminus X$ contains no cycles of length $l$. Observe that $X$ has to include at least one vertex from each cycle of length $l$ spanned between vertices $t_x$ and~$t_{\neg x}$ for every variable $x$, and~at least two vertices from each subgraph induced by a~triple of cycles of length $l$ spanned between $s_{S,r_1}$, $s_{S,r_2}$, $s_{S,r_3}$ for every clause $S=r_1\vee r_2\vee r_3$. All the mentioned subgraphs are pairwise disjoint, so $X$ has to contain exactly one vertex from each cycle spanned between $t_x$ and~$t_{\neg x}$ and~exactly two vertices from each subgraph induced by a~triple of cycles spanned between $s_{S,r_1}$, $s_{S,r_2}$, $s_{S,r_3}$. Observe that we can assume that the solution does not contain any inner vertex of these cycles, i.e., of degree $2$, because a~choice of such a~vertex can always be substituted with a~choice of $t_x$, $t_{\neg x}$ or~$s_{S,r_i}$ for some $i$ (depending whether we are considering a~variable or~a~clause gadget). Therefore, for each variable $x$, the set $X$ contains exactly one vertex from the set $\{t_x,t_{\neg x}\}$ and~for each clause $S=r_1\vee r_2\vee r_3$, $X$ contains exactly two vertices from the set $\{s_{S,r_1},s_{S,r_2},s_{S,r_3}\}$. Consider an~assignment $\phi$ such that $\phi(x)=TRUE$ if $t_x\in X$ and~$\phi(x)=FALSE$ if $t_{\neg x}\in X$. We claim that $\phi$ satisfies $\ff$.

Consider a~clause $S=r_1\vee r_2\vee r_3$. Let $i$ be such an~index that $s_{S,r_i}\notin X$. Consider a~cycle of length $l$ formed by two paths connecting vertices from $\psi(r_i)$: one from the gadget $C_S$ of length $\beta$ and~one from the gadget $Q_x$ of length $\alpha$, where $r=x$ or~$r=\neg x$. As~$X$ hits this cycle, then $t_{r_i}\in X$, so $r_i$ satisfies $S$. As~$S$ was an~arbitrary clause, this concludes the proof.
\end{proof}
\vskip 0.5cm
\subsubsection*{The bound on pathwidth.} 

\begin{lemma}\label{lem:pathwidth}
$\pw(G)=O(\sqrt{n})$ and~a~decomposition of such width can be computed in polynomial time.
\end{lemma}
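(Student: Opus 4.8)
The plan is to exhibit explicitly a path decomposition of $G$ of width $O(\sqrt{n})$. The key structural observation is that the vertex set $P = A \cup B$ has size $O(\sqrt{n})$ and that $G - P$ decomposes into a disjoint union of paths: every gadget $Q_x$ or $C_S$ consists, after removing its endpoints in $P$, of a bounded number of internal paths of length $\le \beta$, and these internal parts are pairwise vertex-disjoint across all gadgets (this is exactly the "pairwise independence" observation already used in the soundness proof). Thus the natural strategy is: keep all of $P$ in every bag, and then process the gadgets one at a time, for each gadget sliding a small window along its constituent paths.

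First I would build, for a single gadget $H$ (variable or clause), a path decomposition of $G[P \cup V(H)]$ in which every bag contains all of $P$ together with $O(1)$ further vertices. Concretely, each gadget is a union of $O(1)$ paths whose endpoints lie in $P$; a path of length $\le \beta$ on internal vertices $w_1,\dots,w_{\beta-1}$ (plus its two $P$-endpoints, which are already globally present) is handled by the standard path decomposition with bags $P \cup \{w_i, w_{i+1}\}$; concatenating the decompositions of the $O(1)$ paths of the gadget, and inserting between consecutive ones a bag equal to $P$ (which is legal since consecutive path-pieces share only $P$-vertices), gives a path decomposition of $G[P \cup V(H)]$ of width $|P| + O(1)$. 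I would double-check the two tree-decomposition axioms here: every edge of $H$ lies inside some bag (true, since every edge is either inside one path-piece or incident to a $P$-vertex, which is everywhere), and the bags containing any fixed vertex form a contiguous interval (true for internal vertices by construction, and trivially true for $P$-vertices since they are in all bags).

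Next I would concatenate these per-gadget path decompositions, over all $n$ variable gadgets and all $m$ clause gadgets, again gluing consecutive ones with an intermediate bag equal to $P$. Since distinct gadgets share only vertices of $P$, the connectivity (interval) condition is preserved; every edge of $G$ lies in some gadget and hence in some bag; and $\bigcup$ of all bags is $V(G)$. The resulting width is $|P| + O(1) = 2\lceil\sqrt{2n}\rceil + O(1) = O(\sqrt n)$. Since $l$ is a constant, $\alpha$ and $\beta$ are constants, so "$O(1)$" here really is an absolute constant. Finally, the whole construction is explicit and local, so it is clearly carried out in time polynomial in $|G|$ (indeed linear), which gives the algorithmic part of the statement.

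The only mildly delicate point — and the place I would be most careful — is verifying that gluing two path decompositions whose overlap is exactly $P$, by inserting a bag $P$ between them, does not violate the interval condition for any internal vertex: this needs that no internal vertex of one glued piece reappears in a later piece, which is precisely the disjointness of the non-$P$ parts of the gadgets. This is immediate from the construction (the injectivity of $\psi$ ensures the $P$-endpoints are assigned consistently, and fresh internal vertices are created for each path of each gadget), but it is the hinge on which the whole bound rests, so I would state it carefully rather than treat it as obvious. No genuine obstacle arises beyond bookkeeping.
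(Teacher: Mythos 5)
Your overall strategy --- keep all of $P$ in every bag and exploit the fact that the non-$P$ parts of the gadgets are pairwise disjoint and of constant size --- is the same as the paper's, which simply takes one bag per gadget consisting of $P$ together with the entire gadget and arranges these $n+m$ bags in a path in any order. However, your finer-grained construction rests on a structural claim that is false: it is not true that each gadget is a union of $O(1)$ paths whose endpoints lie in $P$. In the variable gadget $Q_x$ the two paths joining $t_x$ to $t_{\neg x}$ (of lengths $\alpha$ and $\beta$) have both endpoints outside $P$, and in the clause gadget $C_S$ the six paths joining $s_{S,r_1},s_{S,r_2},s_{S,r_3}$ likewise have endpoints outside $P$; moreover these junction vertices are shared by several path pieces of the same gadget, so it is also false that consecutive pieces ``share only $P$-vertices''. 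Consequently your recipe breaks exactly at the point you flagged as delicate: if the bags for the $t_x$--$t_{\neg x}$ piece are $P\cup\{w_i,w_{i+1}\}$ with the endpoints treated as ``already globally present'', then the edges $t_x w_1$ and $w_{\alpha-1} t_{\neg x}$ lie in no bag; and if you instead add $t_x,t_{\neg x}$ to those bags, then inserting a bag equal to $P$ between two pieces of the same gadget that both contain $t_x$ violates the interval condition for $t_x$ (which also appears as an internal vertex of the $u$--$v$ piece).

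The gap is easily repaired precisely because each gadget has only $O(1)$ vertices outside $P$: either keep the junction vertices of the current gadget ($t_x,t_{\neg x}$, resp.\ $s_{S,r_1},s_{S,r_2},s_{S,r_3}$) in every bag of that gadget's segment, or, more simply, do what the paper does and use a single bag per gadget containing $P$ together with the whole gadget. Your cross-gadget gluing is sound, since distinct gadgets indeed intersect only inside $P$, and either fix yields width $|P|+O(1)=O(\sqrt{n})$ with a polynomial-time construction; but as written the within-gadget part of your decomposition is not a valid path decomposition, so the argument needs this correction.
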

\begin{proof}
As was already mentioned in the proof of Lemma \ref{lem:soundness1}, the parts of gadgets not contained in $P$ are pairwise independent. Moreover, the gadgets are of constant size. Therefore, we can create a~path decomposition of width $O(\sqrt{n})$ in the following manner. We construct $n+m$ bags, one for each gadget. The bag contains the whole set $P$ and~the whole gadget, thus having size $O(\sqrt{n})$. We arrange the bags into a~path in any order.
\end{proof}

\end{document}